\newtheorem{example}{Example}
\newtheorem{theorem}{Theorem}
\newtheorem{definition}{Definition}
\def\BibTeX{{\rm B\kern-.05em{\sc i\kern-.025em b}\kern-.08em
    T\kern-.1667em\lower.7ex\hbox{E}\kern-.125emX}}
\begin{document}

\title{HIGGS: \underline{HI}erarchy-\underline{G}uided \underline{G}raph Stream \underline{S}ummarization}


\author{Xuan Zhao$^{1,3}$, Xike Xie$^{2,3}$, Christian S. Jensen$^{4}$ \\
	$^1$School of Computer Science, University of Science and Technology of China (USTC), China \\ $^2$School of Biomedical Engineering, USTC,  China\\
	$^3$Data Darkness Lab, MIRACLE Center, Suzhou Institute for Advanced Research, USTC, China\\ $^4$Aalborg University, Aalborg, Denmark\\
	\texttt{zhaoxuan2118@mail.ustc.edu.cn}, \texttt{xkxie@ustc.edu.cn}, \texttt{csj@cs.aau.dk}\\
}

\maketitle

\begin{abstract}
Graph stream summarization refers to the process of processing a continuous stream of edges that form a rapidly evolving graph.
The primary challenges in handling graph streams include the impracticality of fully storing the ever-growing datasets and the complexity of supporting graph queries that involve both topological and temporal information.
Recent advancements, such as PGSS and Horae, address these limitations by using domain-based, top-down multi-layer structures in the form of compressed matrices. However, they either suffer from poor query accuracy, incur substantial space overheads, or have low query efficiency.

This study proposes a novel item-based, bottom-up hierarchical structure, called {\it HIGGS}. Unlike existing approaches, HIGGS leverages its hierarchical structure to localize storage and query processing, thereby confining changes and hash conflicts to small and manageable subtrees, yielding notable performance improvements.
HIGGS offers tighter theoretical bounds on query accuracy and space cost.
Extensive empirical studies on real graph streams demonstrate that, compared to state-of-the-art methods, HIGGS is capable of notable performance enhancements: it can improve accuracy by over $3$ orders of magnitude, reduce space overhead by an average of $30$\%, increase throughput by more than $5$ times, and decrease query latency by nearly $2$ orders of magnitude.
\end{abstract}

\begin{IEEEkeywords}
Graph stream summarization, graph queries
\end{IEEEkeywords}

\section{INTRODUCTION}

The importance of graph stream summarization has been underscored by applications of efficient processing and querying large volumes of dynamic graph data over specified temporal ranges~\cite{sarmento2016social,aggarwal2014evolutionary, long2011towards,cordeiro2016online,gou2019fast, chen2022horae,jiang2023auxo, ke2022multi, tian2021exploiting,ficel2021graph}. For example, in social network analysis, it helps to detect trending topics and the evolution of discussions over defined temporal intervals, enhancing insights into user engagement~\cite{long2011towards,cordeiro2016online,gou2019fast}. In pandemic analysis, it enables rapid identification of infection clusters and spread patterns during specific time periods~\cite{chen2022horae,jiang2023auxo}. Financial organizations use it to quickly identify fraudulent transaction patterns within certain time frames~\cite{ke2022multi,gaber2005mining}. Urban traffic management~\cite{tian2021exploiting,ficel2021graph,fang2021mdtp} benefits from analyzing and optimizing traffic flow based on historical data during peak hours or special events, specified by time ranges.

\begin{figure}[t]
    \centering
    \begin{subfigure}[b]{0.48\columnwidth}
        \includegraphics[width=\textwidth]{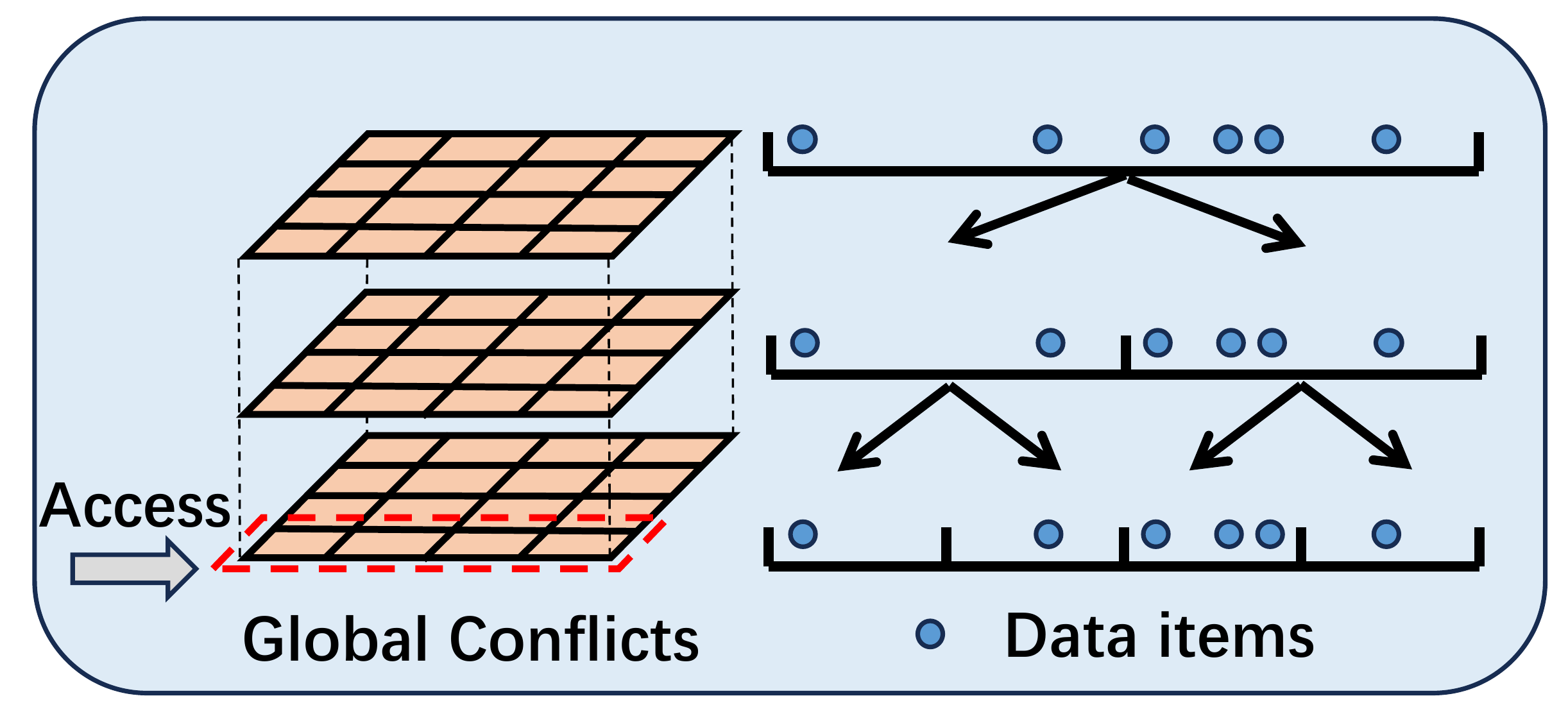}
        \caption{Top-down, Domain-based
        }
        \label{fig:preFrame}
    \end{subfigure}
    \hfill
    \begin{subfigure}[b]{0.48\columnwidth}
        \includegraphics[width=\textwidth]{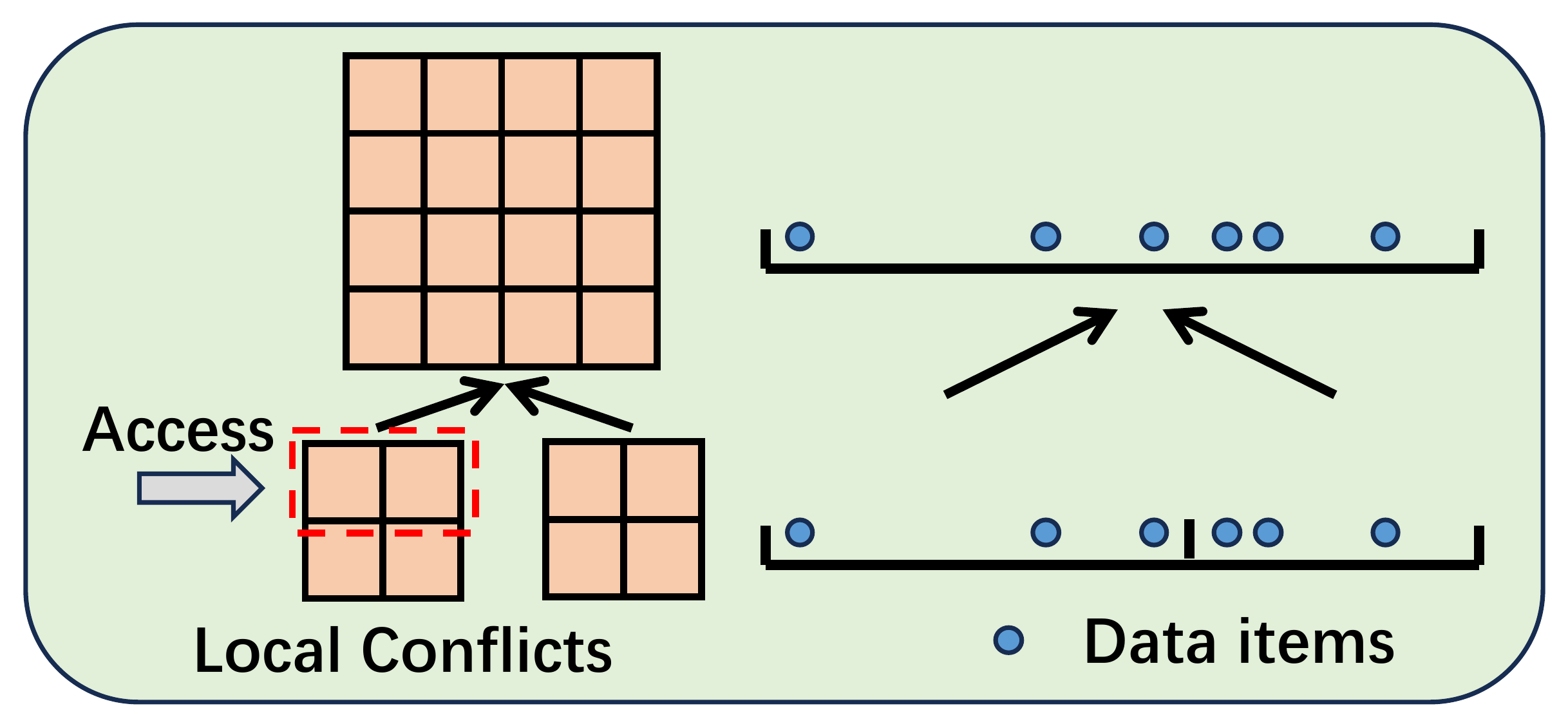}
        \caption{Bottom-up, Item-based}
        \label{fig:nowFrame}
    \end{subfigure}
    \vspace{-5pt}
    \caption{Architectures of Graph Stream Summarization}
    \label{fig:compareFrameworks}
    \vspace{-22pt}
\end{figure}


Formally, a graph stream \cite{gou2021sliding,li2019time,pacaci2020regular,pacaci2022evaluating,aggarwal2010dense, mcgregor2014graph,suzumura2014towards, aggarwal2011outlier} represents an evolving graph $G = (V,E)$ as stream items arrive. Items in a graph stream are typically represented in the form $(s_{i},d_{i},w_{i},t_{i})$, corresponding to a directed edge from vertices $s_{i}$ to $d_{i}$ in $G$, where the edge has a weight of $w_{i}$ at arrival timestamp $t_{i}$.
In a graph stream, edges between the same source and destination vertices can appear multiple times with different weights and timestamps.
Due to the immense scale and high dynamism of graph streams, real-time processing and storage of this data, along with executing complex queries, present significant technical challenges \cite{aggarwal2003framework,wu2006high}.


Pioneering works on graph stream summarization techniques \cite{tang2016graph, khan2016query,gou2019fast, jiang2023auxo,chen2022horae,jia2023persistent} address the problem by employing count-min sketches \cite{cormode2005improved} to compress graph streams, tolerating minimal precision loss and typically excluding temporal information.
The count-min sketch uses a zero-initialized 2D array of buckets and multiple hash functions to map stream items to array indices. Each hash function increments its counter on insertion, and querying an item returns the minimum value among the relevant counters.
Tang et al. proposed TCM \cite{tang2016graph}, a sophisticated extension of the count-min sketch, to support graph streams. TCM uses a set of compressed matrices, each functioning like a count-min sketch, to efficiently sketch graph streams.
Gou et al. introduced GSS \cite{gou2019fast}, which extends TCM by combining an compressed matrix with an adjacency list and uses fingerprints as identifiers for nodes along with square hashing to enhance update and query performance.
Jiang et al. proposed Auxo \cite{jiang2023auxo}, which extends GSS by organizing the compressed matrices into a tree structure and setting the fingerprints as tree prefixes.
To evaluate a vertex query~\cite{tang2016graph}, these methods access and aggregate all buckets storing the incident edges of the specified vertex, usually encompassing a row or column of the matrix.
Notably, these summarization techniques focus on non-temporal graph queries.
Extending these techniques directly to encompass multi-layer and multi-granularity aspects required in temporal information-aware graph summarization is non-trivial.

Recent advances in graph stream summarization have incorporated temporal information, targeted on supporting {\it \underline{T}emporal \underline{R}ange \underline{Q}ueries} (TRQ {\it in short}) \cite{jia2023persistent,chen2022horae}.
TRQ tasks aim to retrieve graph streams within a specified time interval, effectively answering graph queries, such as vertex, edge, path, and subgraph queries, concerning events that occurred within that time window.
Horae \cite{chen2022horae} and PGSS \cite{jia2023persistent} represent the latest advancements in TRQ-aided graph stream summarization, both based on TCM, with Horae achieving the state-of-the-art (SOTA) performance.
Fig.~\ref{fig:preFrame} shows an example of Horae's architecture, demonstrating top-down recursive splitting of the temporal domain into $3$ layers of uniform temporal granularities. When evaluating a vertex query within a temporal range, the query range is decomposed into multiple sub-ranges corresponding to different temporal granularities.
Each sub-range necessitates a query on a specific layer. The results
from these layers are then aggregated to finalize the query result.
Notice that the previously mentioned multi-layer structure is not hierarchical, meaning there is no explicit parent-child relationship between layers.
Then, each layer's compressed matrix is constructed using the entire data stream on a global scale.

There are several drawbacks in this non-hierarchical
design that have hindered the performance of graph stream summarization, in terms of query accuracy, time efficiency, and space efficiency.
{\bf 1)} \underline{\it Global Hashing Conflicts}: 
Decomposing a query into different layers leads to global hashing conflicts at each layer,
which deteriorate query accuracy. 
{\bf 2)} \underline{\it Lack of Temporal Hierarchy}:
The absence of an explicit hierarchical structure for
organizing temporal information impacts insertion and query efficiency. For example, evaluating a vertex query generates a equivalent class of sub-range queries, each requiring access to the matrix built on the global data stream and resulting in significant runtime overhead.
{\bf 3)} \underline{\it Inadaptivity to Stream Irregularity}: The flat, uniform grid-like structure is not adaptive to the irregularity of graph streams.
For example, there are only $2$ leaves and $2$ layers in Fig.~\ref{fig:nowFrame} (bottom-up, item-based), in comparison to $4$ leaves and $3$ layers in Fig.~\ref{fig:preFrame} (top-down, domain-based).
Additionally, its reliance on predefined matrix sizes undermines its adaptivity to the stream irregularity.

The irregularity of graph streams stems from two factors. {\bf 1)}
\underline{\it Skewed Vertex Degree Distribution}: Vertex degrees follow a
power-law distribution, implying a few high-degree vertices (head
vertices) connect to a large number of low-degree vertices,
as shown in Fig.~\ref{fig:frequencyDegree}.
{\bf 2)} \underline{\it Impact of Head Vertices}: Changes
in a head vertex lead to significant alterations in its numerous incident edges, causing substantial variations in graph streams. In Fig.~\ref{fig:frequencyEdge}, it shows the distribution of hot time intervals where a
large number stream edges occur. The irregularity plays a crucial role in balancing space cost with query accuracy and efficiency, which is the essence of graph stream summarization.

\begin{figure}[t]
    \centering
    \begin{subfigure}[b]{0.32\columnwidth}
        \includegraphics[width=\textwidth]{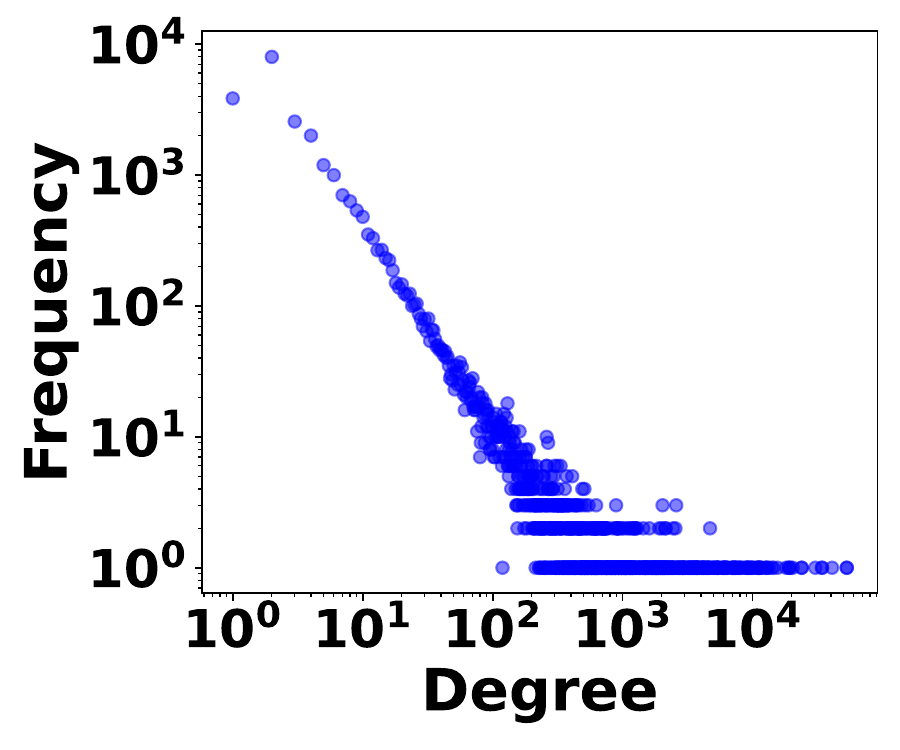}
        \caption{Lkml}
        \label{fig:skewnessLkml2}
    \end{subfigure}
    \hfill
    \begin{subfigure}[b]{0.32\columnwidth}
        \includegraphics[width=\textwidth]{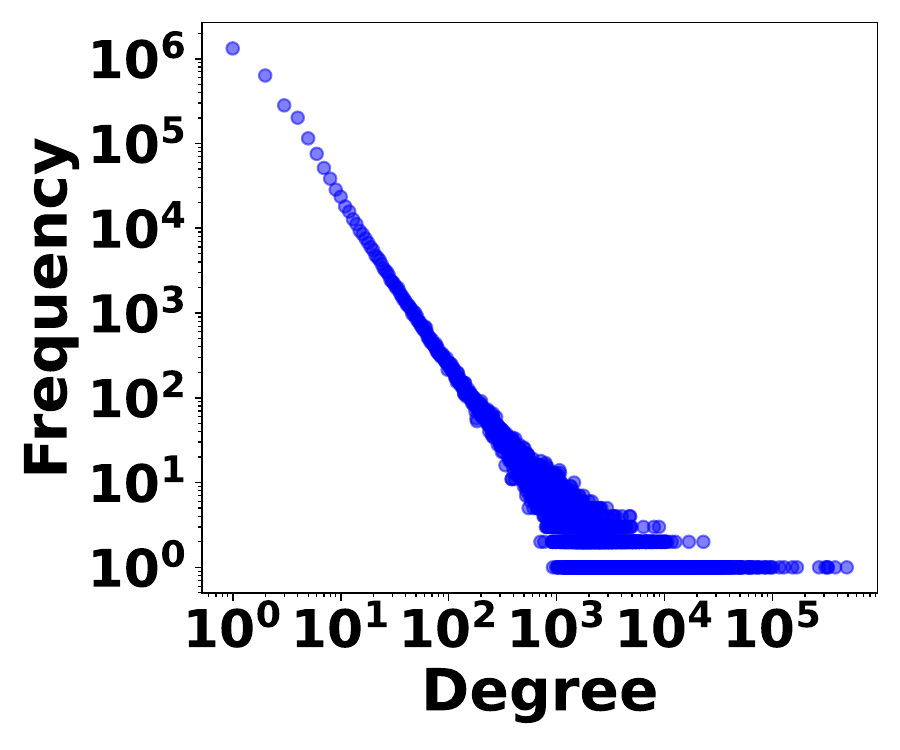}
        \caption{Wiki-talk}
        \label{fig:skewnessWiki2}
    \end{subfigure}
    \hfill
    \begin{subfigure}[b]{0.32\columnwidth}
        \includegraphics[width=\textwidth]{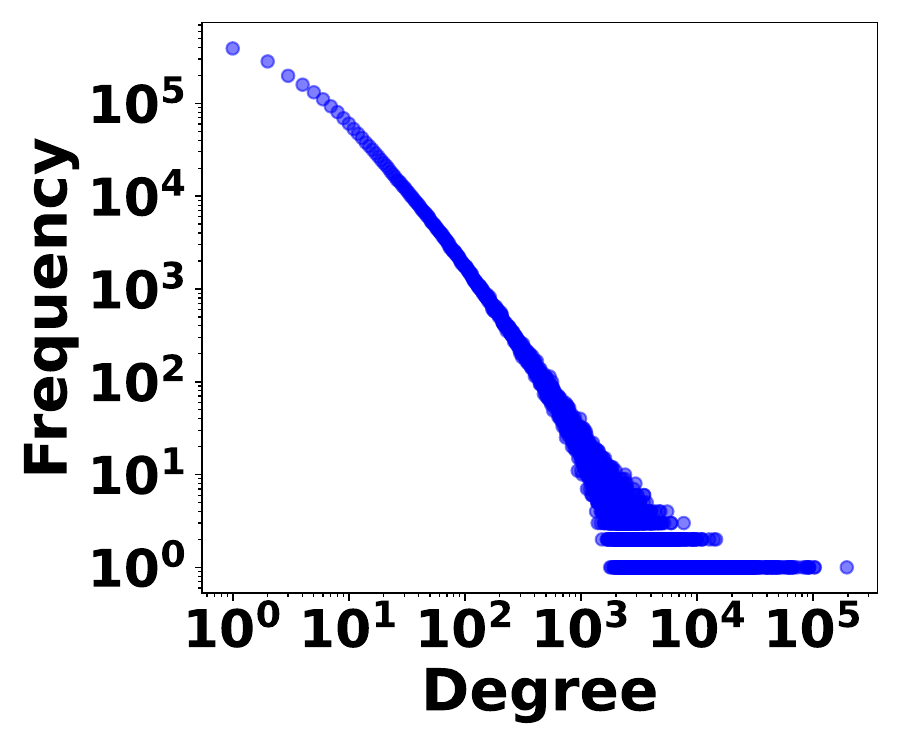}
        \caption{Stackoverflow}
        \label{fig:skewnessStack2}
    \end{subfigure}
    \vspace{-5pt}
    \caption{Skewness of Vertex Degrees}
    \label{fig:frequencyDegree}
%
    \begin{subfigure}[b]{0.32\columnwidth}
        \includegraphics[width=\textwidth]{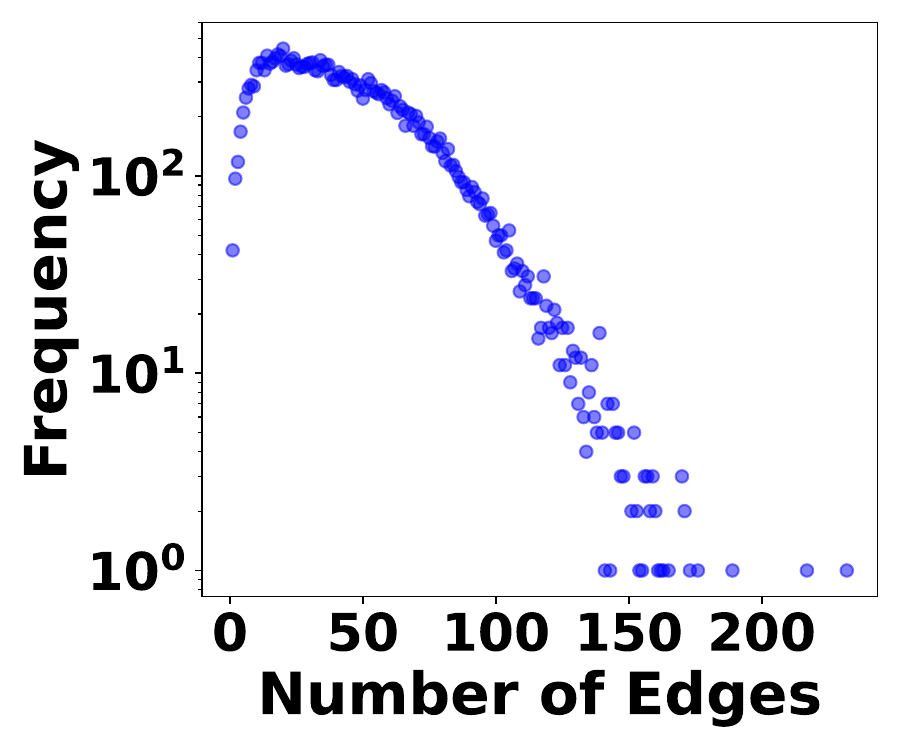}
        \caption{Lkml}
        \label{fig:skewnessLkml1}
    \end{subfigure}
    \hfill
    \begin{subfigure}[b]{0.32\columnwidth}
        \includegraphics[width=\textwidth]{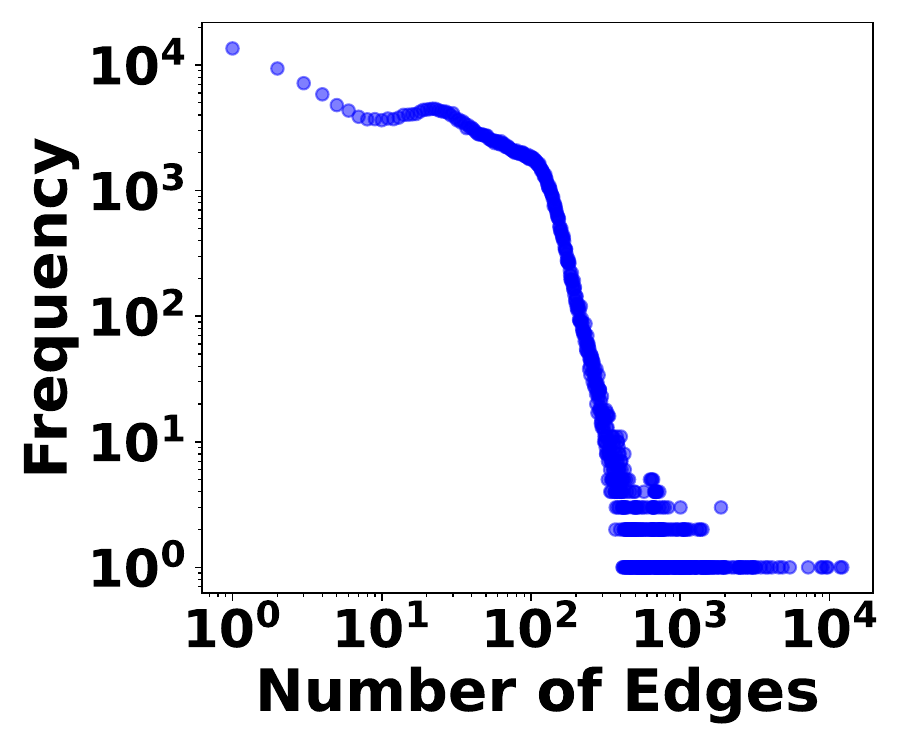}
        \caption{Wiki-talk}
        \label{fig:skewnessWiki1}
    \end{subfigure}
    \hfill
    \begin{subfigure}[b]{0.32\columnwidth}
        \includegraphics[width=\textwidth]{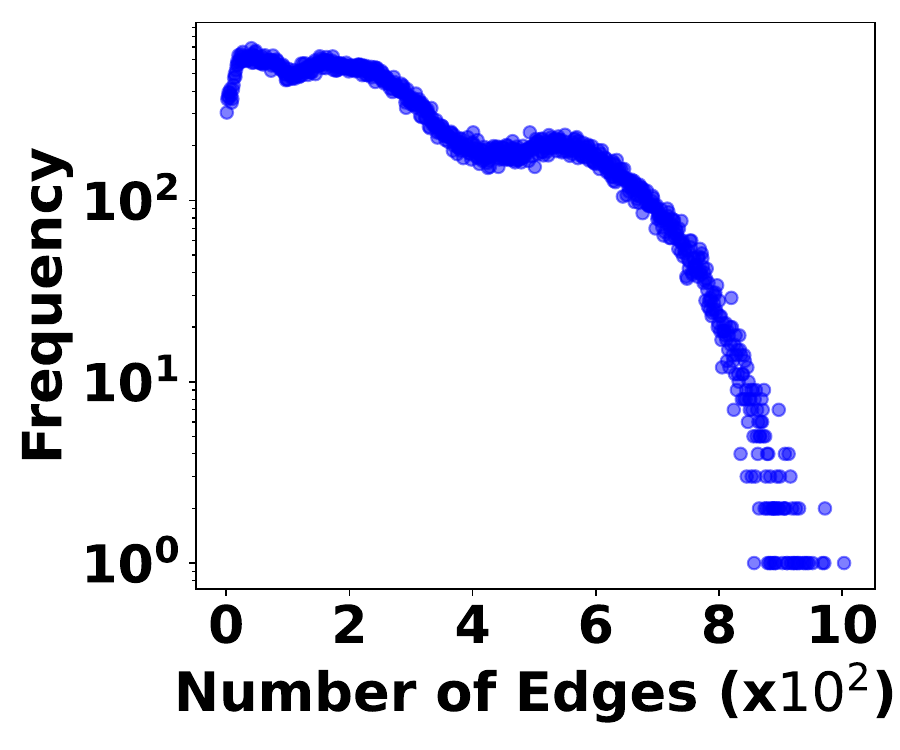}
        \caption{Stackoverflow}
        \label{fig:skewnessStack1}
    \end{subfigure}
    \vspace{-5pt}
    \caption{Irregularity of Graph Stream Item Arrivals}
    \vspace{-15pt}
    \label{fig:frequencyEdge}
\end{figure}

We propose HIGGS, a novel item-based,
bottom-up hierarchical data structure designed for summarizing
graph streams with temporal information. HIGGS functions
akin to an aggregated B-tree, as shown in Fig.~\ref{fig:nowFrame}, where each tree
node corresponds to a specific time interval and integrates a compressed
matrix representing summarized graph data from its subtree.
On the bottom layer, it adaptively distributes irregular stream
edges to storage buckets, optimizing space utilization and reducing the tree height. Unlike existing approaches relying on global structural information, HIGGS utilizes its hierarchical structure to
localize storage and query processing and confine the changes and hash conflicts to small and managable subtrees. Moreover, HIGGS employs a meticulously designed expansion mechanism for compressed matrices, along with data aggregation and query evaluation, which not only enhance query efficiency but also ensure that no additional error is introduced during the aggregation process. As a result, HIGGS achieves significant improvements in query accuracy, as well as time and space efficiency.
Additionally, we have implemented various optimizations to further enhance the performance
and conducted detailed theoretical analysis to
verify its advantages over existing works.
Comprehensive
and thorough experiments further substantiate our proposal, with
results indicating that HIGGS surpasses other methods in accuracy by at least three orders of magnitude. In summary, the major contributions of this work are as follows:

\begin{itemize}[leftmargin=5pt]
    \item We propose HIGGS, a novel item-based, bottom-up hierarchical structure designed for summarizing graph streams with temporal information. In addition, we design a expansion mechanism for compressed matrices, as well as data aggregation and query algorithms.
    \item This structure leverages its hierarchical organization to localize storage and query processing, effectively confining changes and hash conflicts to small, manageable subtrees. By integrating data aggregation and query algorithms, it achieves significant performance improvements.
    \item We provide a detailed theoretical analysis of our proposal, focusing on space and time efficiency, and most importantly, query accuracy, guaranteed by tighter theoretical bounds.
    \item We report on comprehensive and thorough experiments, finding that HIGGS is capable of notable performance enhancements: it can improve accuracy by over 3 orders of magnitude, reduce space overhead by an average of 30\%, increase throughput by more than 5 times, and decrease query latency by nearly 2 orders of magnitude.
\end{itemize}

\begin{table}[ht]
    \centering
    \vspace{-7pt}
    \caption{Frequently used Notation}
    \vspace{-5pt}
    \label{tab:symbol}
    \begin{tabular}{cl}
        \hline
        Symbol      & Description                                                                     \\
        \hline
        $|V|$ and $|E|$       & the number of vertices and edges in the graph stream \\
        $d_{i}$     & the size of the compressed matrix of the $i$-{th} layer                            \\
        $F_{i}$     & the length of the $i$-{th} layer's fingerprint                                      \\
        $M^{i}_{k}$ & the $k$-{th} matrix in the $i$-{th} layer                                       \\
        $\theta$    & the maximum number of child nodes \\
        $R$         & the number of reduced fingerprint bits in aggregation \\
        $n_{i}$     & the number of nodes in the $i$-{th} layer \\
        $f(v)$      & the fingerprint of $v$ \\
        $h(v)$      & the address of $v$ \\
        $L$         & the length of the graph stream duration \\
        $L_{q}$     & the length of the time range to be queried \\
        $b$         & the number of entries in a bucket \\
        $Z$         & the size of the value range of the hash function \\
        \hline
    \end{tabular}
    \vspace{-10pt}
\end{table} 

\section{RELATED WORK}
{\bf Stream Summarization.} In the era of big data, managing and analyzing massive streams of data presents significant challenges. Current solutions often rely on sketches, a set of algorithms aimed at representing large datasets with concise summaries. Extensive literature covers various sketching techniques, including Count Sketch \cite{charikar2002finding}, count-min (CM) sketch \cite{cormode2005improved}, CU sketch \cite{estan2002new}, among others \cite{cohen2003spectral,deng2007new,yang2017pyramid,zhang2020off,liu2021xy,cao2023meta,liu2024probabilistic,gao2022multi,xie2016elite}. For example, the CM sketch utilizes $d$ arrays of counters, each associated with a unique hash function. Upon the arrival of a stream item, each hashed counter is incremented. Deletion reverses this process by decrementing the counters. During querying, the minimum value across the hashed counters is returned for the queried item. 

\begin{figure}[ht]
\vspace{-12pt}
    \centering
    \includegraphics[width=.9\linewidth]{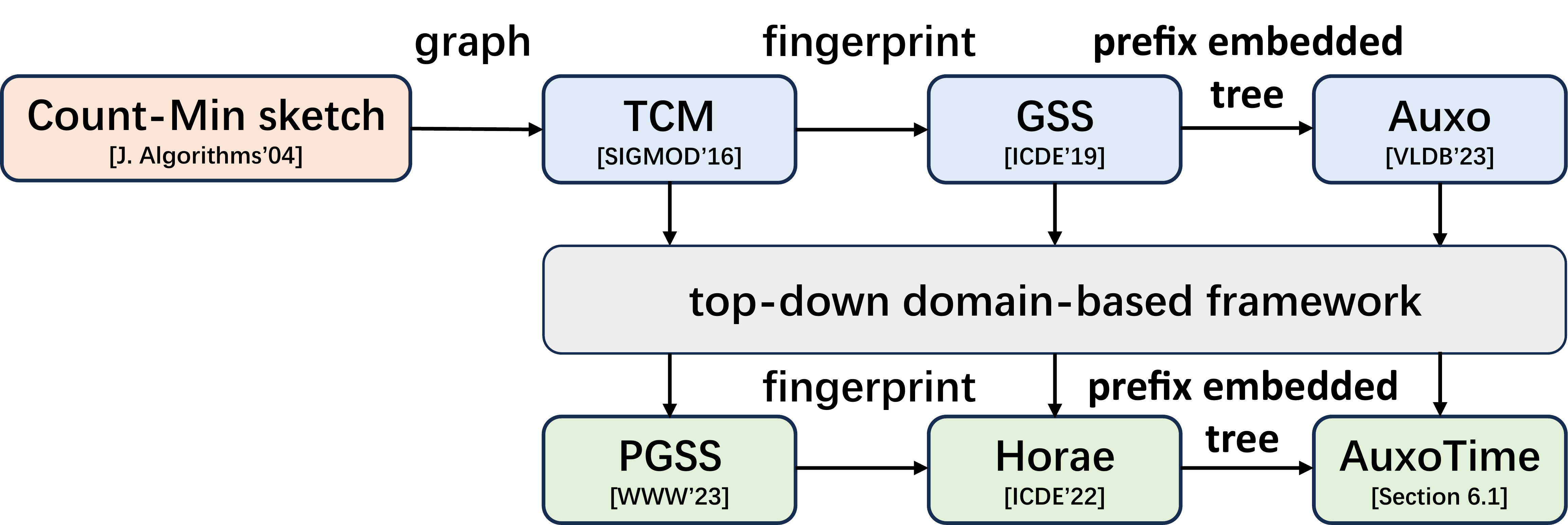}
    \caption{Roadmap of Technical Evolution for Graph Stream Summarization (\cite{cormode2005improved}\cite{tang2016graph}\cite{gou2019fast}\cite{jiang2023auxo}\cite{jia2023persistent}\cite{chen2022horae})}
    \label{fig:Development}
    \vspace{-5pt}
\end{figure}

{\bf Graph Stream Summarization.}
Graph stream processing is inherently more complex than conventional data stream processing, due to the underlying graphical topology.
As a result, considerable research has focused on graph stream summarization \cite{tang2016graph,khan2016query,gou2019fast, jiang2023auxo,chen2022horae,jia2023persistent,ashrafi2021gs4,zhao2011gsketch,tsalouchidou2018scalable,hassan2018sbg,feng2023mayfly}. Fig.~\ref{fig:Development} shows the technical evolution of graph stream summarization. TCM, introduced by Tang et al. \cite{tang2016graph}, employs multiple matrices, each associated with a hash function. During updates, these hash functions determine the insertion locations based on the hash values of the edge's source and destination nodes, mapped to row and column addresses in the matrices. When querying, TCM retrieves the minimum aggregated weight from corresponding positions across all matrices. Despite its versatility in supporting vertex and path queries, TCM's accuracy suffers from significant hash collisions. Addressing this issue, Khan et al. proposed gMatrix \cite{khan2016query}, a variant of TCM using reversible hash functions to enhance query support but introducing additional errors.
To improve the accuracy of queries in TCM, Gou et al. proposed GSS \cite{gou2019fast}, which uses a matrix and an adjacency list for data storage, employing fingerprints for edge identification. When an edge is hashed into a matrix bucket, it checks whether the fingerprints match: if yes, the weights are aggregated; otherwise, the edge is inserted into the adjacent list. To reduce the size of the list, square hashing is employed to increase edge mapping positions in the matrix, thereby reducing insertions into the list.
However, handling large-scale graph streams with GSS remains challenging because the potential for high volumes can lead to longer lists, which in turn can degrade performance.
Jiang et al. introduced Auxo \cite{jiang2023auxo}, a scalable graph stream summarization method using the prefix embedded tree (PET). PET arranges compressed matrices in a tree structure and directs edge insertion based on prefix fingerprints. They also developed a proportional incremental strategy to improve PET’s memory efficiency. Consequently, Auxo exhibits good scalability with large-scale graph streams. However, like other existing methods mentioned, Auxo lacks support for temporal range queries.

{\bf Stream Summarization for TRQ.}
Recent advancements in supporting time range queries include many methods~\cite{wei2015persistent,peng2018persistent,shi2021time,fan2023hoppingsketch}.
Wei et al. introduced persistent CountMin and AMS sketches \cite{wei2015persistent}, which enable querying historical data streams. Peng et al. proposed the Persistent Bloom Filter \cite{peng2018persistent}, designed to identify elements within specific time intervals. Shi et al. developed at-the-time persistent (ATTP) and back-in-time persistent (BITP) sketches \cite{shi2021time} for temporal analytics on data streams. Fan et al. enhanced temporal membership queries with the hopping sketch \cite{fan2023hoppingsketch}, using a sliding window approach for accurate frequency queries. However, these methods do not cater specifically to graph streams and do not preserve graph topological structures, limiting their ability to execute queries based on graph topology.

{\bf Graph Stream Summarization for TRQ.}
Cutting-edge works that can maintain graphical topology and support time range queries include PGSS \cite{jia2023persistent} by Jia et al. and Horae \cite{chen2022horae} by Chen et al. PGSS extends TCM by storing multiple arrays of counters in each matrix bucket, each corresponding to a different time granularity.
When inserting edges, it uses hash functions to locate bucket positions and updates counters based on the time granularity or layer. Horae, on the other hand, employs time prefix encoding in a multi-layer 
framework. Each layer operates as a GSS with specific time granularity, encoding edges upon arrival and updating nodes and time prefixes accordingly. Both PGSS and Horae adopt a top-down, temporal domain-based multi-layer approach. During queries, they decompose time ranges into sub-ranges per layer granularity, query each sub-range, and aggregate results. 
However, this top-down domain-based structure struggles with irregular graph streams, resulting in poor space efficiency. Each layer employs a compressed matrix for the entire stream, leading to potential hashing conflicts and reduced accuracy. Additionally, larger matrices per layer and the need to access more buckets for time-range queries lower query efficiency.

\section{PROBLEM STATEMENT}

\begin{definition}[Graph Stream]
A graph stream is a sequence of edges $S = \langle e_{1},e_{2},...,e_{n}\rangle$ arriving sequentially over time, where $e_{i} = (s_{i}, d_{i}, w_{i}, t_{i})$ represents a directed edge from vertices $s_{i}$ to $d_{i}$ at time $t_{i}$ with weight $w_{i}$.
\end{definition}
\vspace{-5pt}
Understanding graphical information for any temporal range is crucial, encompassing edge, vertex, path, and subgraph queries. These queries, referred to as {\it Temporal Range Queries} (TRQ in {\it short}), are essential for extracting meaningful insights and comprehending the evolving structure of the graph.
Two fundamental primitives for TRQ are {\it edge} and {\it vertex queries}:

\begin{definition}[TRQ Primitives]
Given a streaming graph $G(V,E)$ and a temporal range $I = [t_{s}, t_{e}]$, the two query primitives, node and edge queries can be defined as:
\begin{itemize}
    \item \textit{\bf Edge Queries.} Given a directed edge $e$ and a specific temporal range $I$, the query returns the aggregated weight of this edge within $I$.
    \item \textit{\bf Vertex Queries.} Given a vertex $v$ and a specific temporal range $I$, the query returns the aggregated weight of all outgoing (or incoming) edges within $I$.
\end{itemize}
\end{definition}

\vspace{-5pt}
Based on the two primitives, more advanced queries, such as path and subgraph queries, can be performed to gain deeper dynamical and graphical insights.
For example, in a subgraph query, we can perform an edge query for each edge within the subgraph, aggregate the weights, and return the results.

\begin{figure}[htbp]
\vspace{-10pt}
    \centering
    \includegraphics[width=\linewidth]{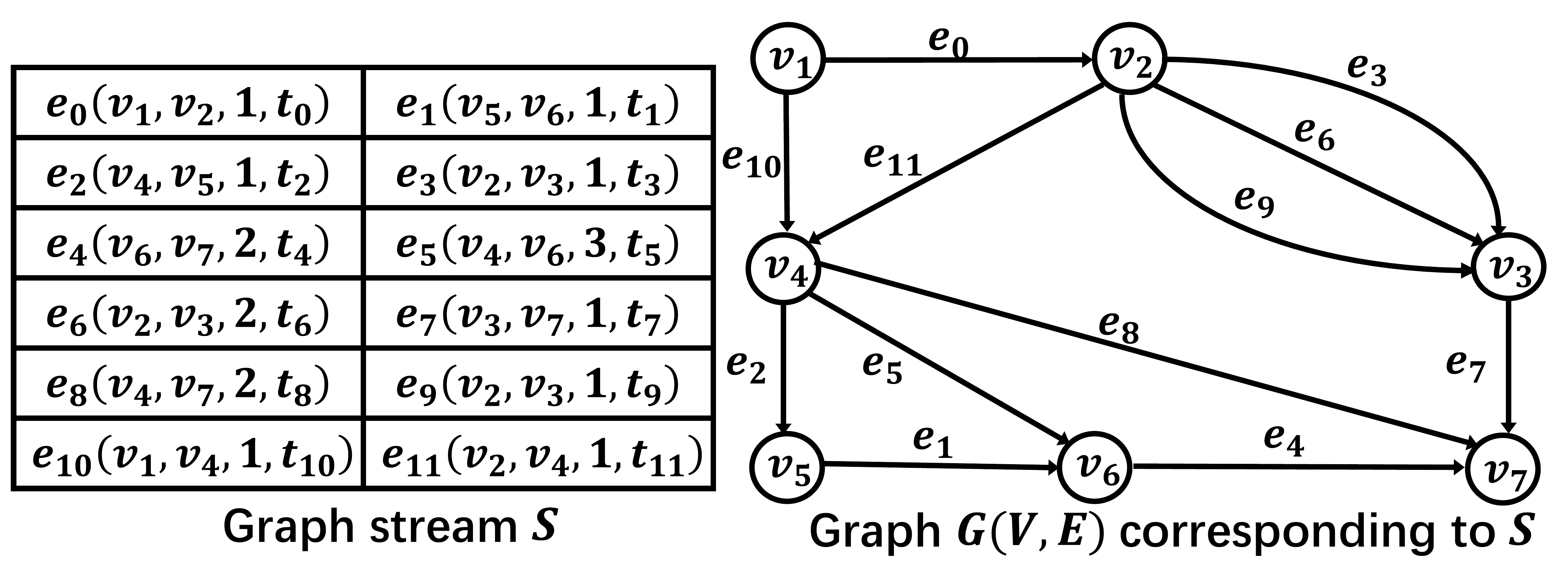}
\vspace{-16pt}
    \caption{An Example of TRQ Primitives}
\vspace{-14pt}
    \label{fig:graph stream}
\end{figure}

\begin{example}
Fig.~\ref{fig:graph stream} depicts the graph stream $S$ and its associated graph $G = (V, E)$. The aggregated weight of the directed edge $v_{2} \rightarrow v_{3}$ from $t_{5}$ to $t_{10}$ is $3$, the sum of weights at $t_{6}$ and $t_{9}$. For vertex queries, the total weight of $v_{4}$'s outgoing edges from $t_{1}$ to $t_{11}$ is $6$, calculated from the edges $(v_{4}, v_{5}, t_{2}, 1)$, $(v_{4}, v_{6}, t_{5}, 3)$, and $(v_{4}, v_{7}, t_{8}, 2)$. For the subgraph $\{(v_{2}, v_{3}), (v_{3}, v_{7}), (v_{2}, v_{4})\}$ between $t_{4}$ and $t_{8}$, only edges $v_{2} \rightarrow v_{3}$ and $v_{3} \rightarrow v_{7}$ contribute to the total weight of $3$, obtained at $t_{6}$ and $t_{7}$, respectively.
\end{example}

The symbols used in this paper are listed in Table~\ref{tab:symbol}.

\section{Methodology}
In this section, we report technical details about the proposed item-based, bottom-up hierarchical architecture. Section~\ref{subsec:frame} illustrates the overview of the architecture. Section~\ref{subsec:operation} investigates relevant operations, including construction, insertion, and query evaluation. Section~\ref{subsec:opt} discusses the optimization techniques for further enhancing the performance.


\subsection{Architecture}
\label{subsec:frame}

\begin{figure}[htbp]
    \centering
    \vspace{-10pt}
\includegraphics[width=\linewidth]{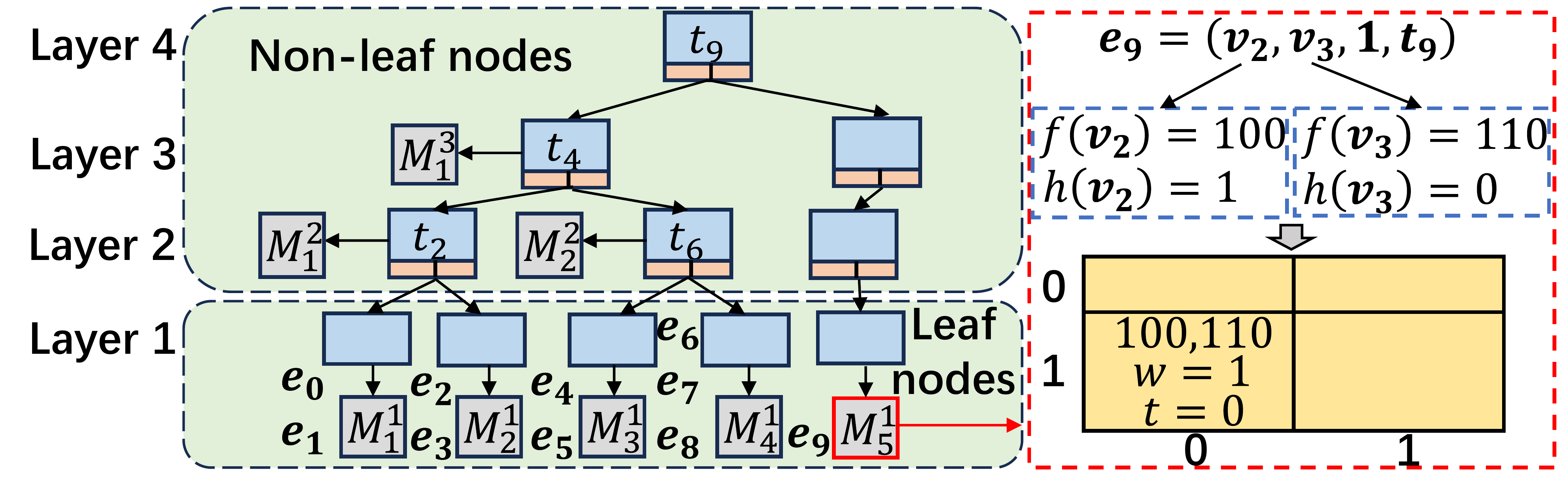}
    \vspace{-16pt}
    \caption{Overview of the Architecture of HIGGS}
    \vspace{-8pt}
    \label{fig:newHIGGS}
\end{figure}

The architecture of HIGGS is essentially an aggregated B-tree with the following properties:

\begin{itemize}
    \item Each tree node corresponds to a time interval and is with at most $\theta$ children, along with a compressed matrix, summarizing the graph stream for that interval.
    \item A non-leaf node with $k$ children contains $k-1$ keys, which are represented by timestamps acting as separation values dividing its subtrees. All leaves appear on the same layer, i.e., the bottom layer.
    \item Each bucket, essentially an element in the compressed matrix, contains a set of entries, each representing the tuple $(f(s_{i}), f(d_{i}), t_{i}, w_{i})$ for an edge $e_{i}$.
    \begin{itemize}
        \item $f(s_{i})$ and $f(d_{i})$ are fingerprints (compact identifiers) of the source and destination vertices, respectively;
        \item $t_{i}$ is the offset relative to the matrix's start time;
        \item $w_{i}$ is the weight of $e_{i}$.
    \end{itemize}
\end{itemize}

%
%

\begin{example}
Fig.\ref{fig:newHIGGS} shows the HIGGS structure, which includes four layers: the bottom layer contains leaf nodes, and the top three layers comprise non-leaf nodes. Each node holds up to one key and can link to two children. Matrices, labeled as $M_{q}^{p}$ where $p$ is the level and $q$ the matrix number, are indexed by timestamp intervals $\{(t_{i}, t_{j})\}$, each covering a specific time range. For example, $M_{2}^{1}$ spans $t_{2}$ to $t_{4}$, while $M_{2}^{2}$ covers $t_{4}$ to $t_{9}$.
\end{example}

\vspace{-5pt}

The non-leaf nodes of HIGGS, like leaf nodes, store multiple keys and pointers to compressed matrices. These keys represent the aggregated temporal scope of their child nodes. Each non-leaf node matrix summarizes data from its child nodes, devoid of timestamps, containing entries as $(f(s_{i}), f(d_{i}), w_{i})$.

\vspace{-5pt}
\subsection{Operations of HIGGS}
\label{subsec:operation}


{\bf Construction.} The construction of HIGGS is item-based and bottom-up, utilizing the sequence in which graph streams arrive. When an edge arrives, it is inserted into the newest compression matrix at the first level in $O(1)$ time. If the matrix is full, a new one is constructed to store the edge, and its timestamp is propagated upwards. If the root becomes full, a new root is created with the original one as its child. Filler nodes ensure that matrices connected to leaf nodes remain at the leaf layer, keeping the tree balanced.


\begin{figure}[htbp]
    \vspace{-8pt}
    \centering
    \includegraphics[width=0.8\linewidth]{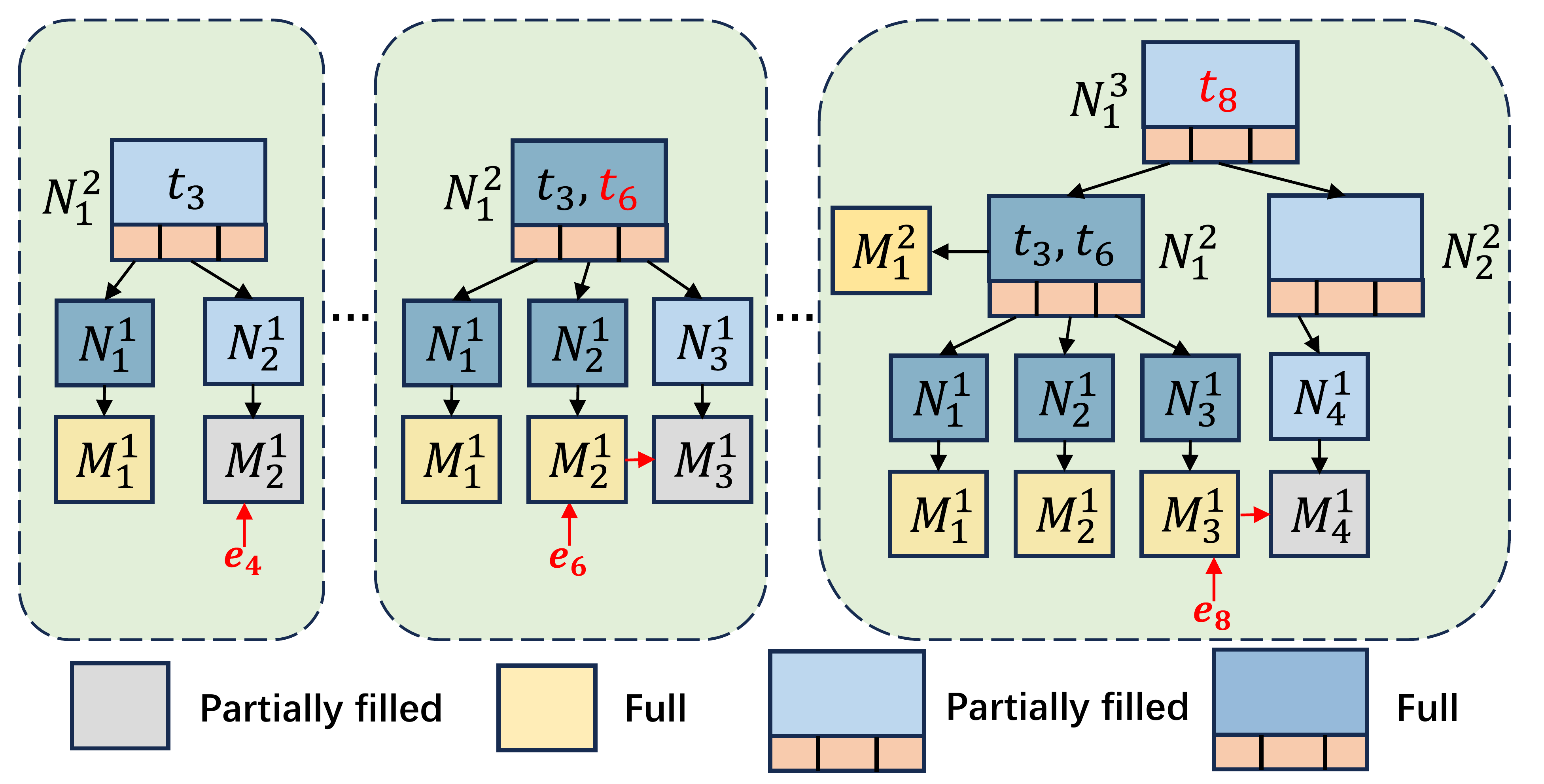}
    \vspace{-5pt}
    \caption{Evolution of the Structure of HIGGS}
    \vspace{-18pt}
    \label{fig:construct}
\end{figure}

\begin{example}
Fig.~\ref{fig:construct} illustrates the evolving HIGGS structure with the insertion of edge $e_{i} = (s_{i}, d_{i}, w_{i}, t_{i})$. Each node holds up to two keys and three children. Edge $e_{4}$ is smoothly inserted into matrix $M_{2}^{1}$, whereas edge $e_{6}$ is relocated to node $N_{1}^{2}$ due to conflicts. Similarly, edge $e_{8}$ also moves to $N_{1}^{2}$, which then reaches its capacity, therefore forming a new node $N_{1}^{3}$ to house $t_{8}$. This causes data aggregation into matrix $M_{1}^{2}$ and the creation of empty node $N_{2}^{2}$ to link $N_{1}^{3}$ and $N_{4}^{1}$.
\end{example}

\textbf{Insertion.}
Algorithm~\ref{alg:Insert} outlines the edge insertion procedure for HIGGS. Initially, HIGGS consists of a single root node, linked to an empty compression matrix $M$. When an incoming edge $e_{i} = (s_{i}, d_{i}, w_{i}, t_{i})$ arrives, it is processed using a hash function $H(\cdot)$ to obtain the hash values $H(s_{i})$ and $H(d_{i})$ for $s_{i}$ and $d_{i}$, respectively. The $F_{1}$-bit suffixes of $H(s_{i})$ and $H(d_{i})$ are then concatenated to create a fingerprint pair $(f(s_{i}), f(d_{i}))$. The remaining bits are used to compute the address pair $(h(s_{i}), h(d_{i}))$. For efficiency, we derive bit operations for calculating a vertex's fingerprint and address:
\begin{equation}
\small
        f(v) = H(v)  \&  (2^{F_{1}} - 1) \quad \text{and} \quad
        h(v) = (H(v) >> F_{1}) \% d_{1}
\label{eq:fanda}	
\end{equation}

Based on the calculated address, we can quickly locate the bucket $M[h(s_{i})][h(d_{i})]$. We then check each entry in this bucket that has a value. If an entry's fingerprint pair matches $(f(s_{i}), f(d_{i}))$ and the recorded timestamp is $t_{i}$, we increase its stored weight by $w_{i}$. If no such entry is found, we insert $(f(s_{i}), f(d_{i}), t_{i}, w_{i})$ into an empty entry. If all entries are occupied without a match, the insertion fails, prompting the creation of a new leaf node that connects to a matrix of the same size as $M$, into which the edge is inserted in the same manner. The timestamp $t_{i}$ is then transmitted to the parent node to serve as an key during queries.


\begin{example}
As shown in Fig.~\ref{fig:newHIGGS}, consider the edge $e_{9} = (v_{2}, v_{3}, 1, t_{9})$ from Fig.~\ref{fig:graph stream}. When insertion into $M^{1}_{4}$ fails, the timestamp $t_{9}$ propagates upward, prompting the construction of a new matrix $M^{1}_{5}$ to insert $e_{9}$. By hashing source node $v_{2}$ and destination node $v_{3}$, we obtain the fingerprint pair $(100, 110)$ and address pair $(1, 0)$. Using the address pair, we locate bucket $M_{5}^{1}[1][0]$ and insert $e_{9}$'s fingerprint pair, weight, and an offset of 0 relative to the matrix's start time.
\end{example}
\begin{algorithm}
\footnotesize
    \caption{\textbf{Insert ($s, d, w, t$)}}
    \label{alg:Insert}
    \begin{algorithmic}[1] 
        \State $(f(s), f(d)) \gets$ the fingerprint pair of $s$ and $d$;
        \State $(h(s), h(d)) \gets$ the address pair of $s$ and $d$;
        \State insert $(f(s), f(d)), w$ and $t$ into $M^{1}_{n_{1}}[h(s)][h(d)]$; \Comment{$n_{1}$ denotes the number of leaf nodes}
        \If {it fails}
            \State $n_{1} \gets n_{1} + 1, l \gets 1$;
            \State construct a new matrix $M^{1}_{n_{1}}$ and insert this edge;
            \Do
                \State $u = \lceil \frac{n_{1} - 1}{\theta^{l}} \rceil$, transmit $t$ to $N_{u}^{l + 1}$;
                \If {$l \geq 2$}
                    \State Aggregate($l, n_{1} - 1$);
                \EndIf
                \State $l \gets l + 1$;
            \doWhile{transmission fails}
        \EndIf
    \end{algorithmic}
\end{algorithm}

{\bf Aggregation.} The aggregation between a parent node and its children plays a vital role in enhancing query accuracy and efficiency. The aggregation process follows a bottom-up way:
\begin{itemize}
    \item \textbf{Leaf Nodes:} Each leaf node's compressed matrix is directly computed based on raw stream items.
    \item \textbf{Non-Leaf Nodes:} Matrices of upper-level nodes aggregate those of their children, encompassing the descendant nodes within their subtrees.
\end{itemize}
Algorithm~\ref{alg:aggregate} describes the details of the aggregation process. For a node at the $(i + 1)$-th layer, we construct a $\sqrt{\theta}d_{i} \times \sqrt{\theta}d_{i}$ matrix to aggregate the $\theta$ matrices of size $d_{i} \times d_{i}$ from its children at the $i$-th layer. We observe that a larger matrix increases the number of address bits. An intuitive approach is to shift additional fingerprint bits into the address, thereby reducing the storage required for fingerprints. Binary representation, which is conducive to data compression and expansion, facilitates data aggregation through simple shift operations. To avoid introducing additional errors during the aggregation process, we maintain the matrix size as a power of two. Assuming that the aggregation process reduces the fingerprint storage by $R$ bits, increasing the address bits by $R$, the matrix size becomes $4^R$ times its original size. Therefore, $\theta$ needs to be a power of four. At the $i$-th layer, the fingerprint length is restricted to $F_{1} - (i - 1)R$ bits. Here, $\theta$ denotes the maximum number of child nodes connected to a node, and $d_{i}$ and $F_{i}$ represent the matrix size and fingerprint length at the $i$-th level, respectively. It is noteworthy that the aggregation operation does not involve the storage of timestamps.

\begin{figure}[htbp]
    \vspace{-5pt}
    \centering
    \includegraphics[width=0.9\linewidth]{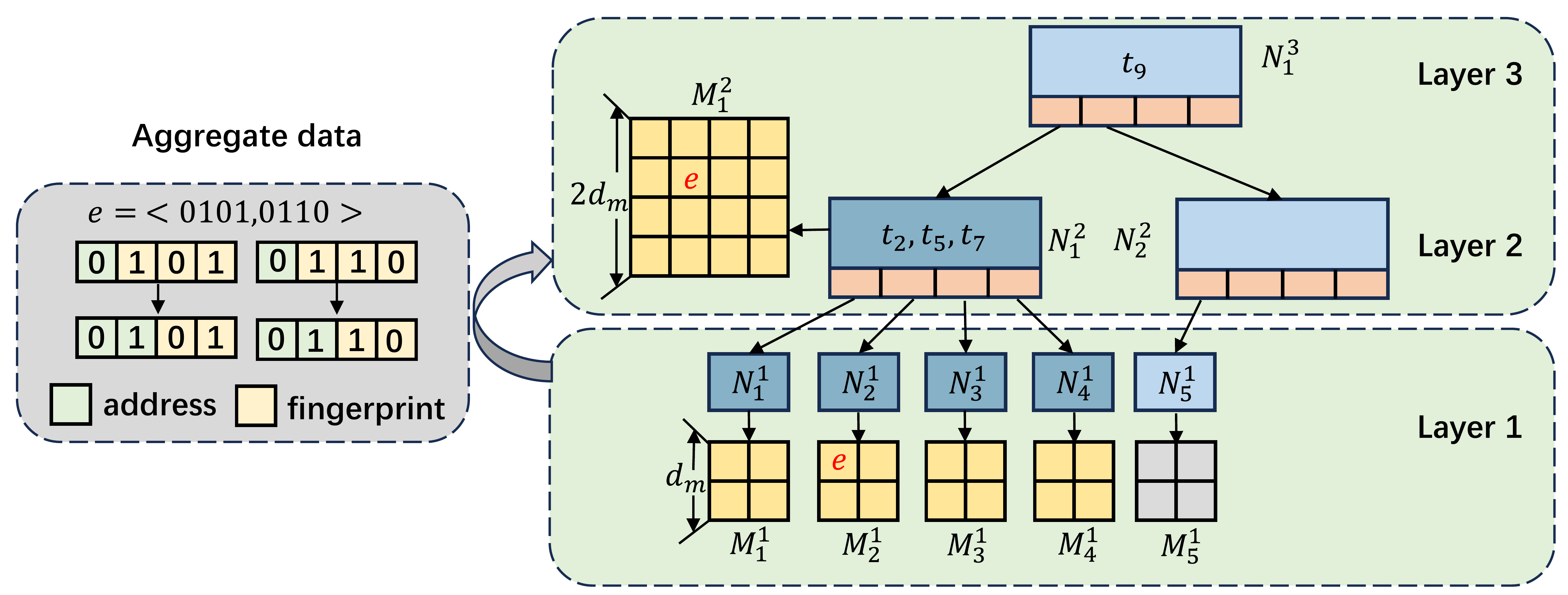}
    \caption{An Example of Aggregation}
    \label{fig:aggregate}
    \vspace{-10pt}
\end{figure}

\begin{example}
Fig.~\ref{fig:aggregate} shows an example of data aggregation with $\theta = 4$, $d_{1} = 2$, and $F_{1} = 3$. For edge $e = (0101,0110)$ in $M_{2}^{1}$, its address pair is $(0,0)$ and fingerprint pair $(101,110)$. During aggregation to $M_{1}^{2}$, with $R = 1$, the leftmost bit of the fingerprint moves to the address's rightmost bit. The new address pair $(01,01)$ locates $e$ in $M_{1}^{2}$, where fingerprint pair $(01,10)$ is stored.
\end{example}

\vspace{-5pt}
\begin{algorithm}
\footnotesize
    \caption{\textbf{Aggregate ($l, u$)}}
    \label{alg:aggregate}
    \begin{algorithmic}[1] 
        \State $d_{l} = \sqrt{\theta}d_{l - 1}$;
        \State $k = \frac{u}{\theta ^{l - 1}}$, construct a $d_{l} \times d_{l}$ matrix $M_{k}^{l}$;
        \For {$i \gets 1$ to $\theta$}
            \For{\textbf{each} $e \in M_{(k - 1)\theta + i}^{l - 1}$} \Comment{$e$ is an edge}
                \State Obtain the fingerprint pair $(f'(s), f'(d))$ and address pair $(h'(s), h'(d))$ of $e$ at level $l$ through shift operations;
                \State insert $(f'(s), f'(d)), w$ into $M^{l}_{k}[h'(s)][h'(d)]$;
            \EndFor
        \EndFor
    \end{algorithmic}
\end{algorithm}
\textbf{TRQ Evaluation Framework.}
For any specified temporal range query, including edge, vertex, path, and subgraph queries, it can be broken down into a series of sub-range queries using the boundary search algorithm (Algorithm~\ref{alg:query}) run on HIGGS. Each sub-range query is performed on its respective compressed matrix, thereby transforming the problem into querying across a set of matrices.
Different types of query primitives, such as edge and vertex queries, correspond to different methods of accessing the compressed matrices.
The boundary search algorithm involves two main steps:

First, starting from the root node of HIGGS, the algorithm identifies the child nodes entirely covered by the queried range $[t_{s}, t_{e}]$ and adds them to the query list $X$. If $[t_{s}, t_{e}]$ falls within the range represented by a child node, further queries are made into the child node, continuing until $t_{s}$ and $t_{e}$ no longer reside within the range of the same child node.

Second, the algorithm then identifies the child nodes containing $t_{s}$ or $t_{e}$ and adds the child nodes of parent nodes that are contained by $[t_{s}, +\infty]$ or $[-\infty, t_{e}]$ to $X$. If $t_{s}$ or $t_{e}$ exactly matches the start or end point of a node's range, the corresponding query can be directly terminated; otherwise, the search extends to the next level of child nodes containing $t_{s}$ or $t_{e}$. Upon reaching the leaf nodes, in addition to adding nodes that meet the previously mentioned conditions, nodes containing $t_{s}$ or $t_{e}$ are also added to $X$. This concludes the decomposition process of the temporal range.

\begin{algorithm}
\footnotesize
    \caption{\textbf{BoundarySearch ($t_{s}, t_{e}$)}}
    \label{alg:query}
    \begin{algorithmic}[1] 
        \State $X \gets \emptyset, j \gets 1$;
        \State $l \gets$ the number of layers in HIGGS;
        \Do
            \State $I_{x}^{l} \gets \mathit{getLeftTimeBoundary(N_{j}^{l}, t_{s})}$; \Comment{$t_{s}$ exactly covers $I_{x}^{l}$}
            \State $I_{y}^{l} \gets \mathit{getRightTimeBoundary(N_{j}^{l}, t_{e})}$;
            \If {$I_{x}^{l} \leq I_{y}^{l}$}
                \State $X \gets X \cup \mathit{getEntities(N_{j}^{l}, I_{x}^{l}, I_{y}^{l})}$; \Comment{get the subranges contained by $[I_{x}^{l}, I_{y}^{l})$ in $N_{j}^{l}$ and add them to $X$}
                \State $lt \gets j - 1 + x, rt \gets j + y$;
                \State break;
            \EndIf
            \State $l \gets l - 1, j \gets j - 1 + x$;
        \doWhile{true}
        \State $p \gets q \gets l - 1, t_{u} \gets I_{x}^{l}, t_{v} \gets I_{y}^{l}$;
        \While{$p \geq 1$ and $t_{s} \neq t_{u}$}
            \State $I_{x}^{p} \gets \mathit{getLeftTimeBoundary(N_{lt}^{p}, t_{s})}, t_{u} \gets I_{x}^{p}$;
            \State $X \gets X \cup \mathit{getEntities(N_{lt}^{p}, I_{x}^{p}, +\infty)}$;
            \State $lt \gets lt - 1 + x$;
            \State $p \gets p - 1$;
        \EndWhile
        \While{$q \geq 1$ and $t_{e} \neq t_{v}$}
            \State $I_{y}^{q} \gets \mathit{getRightTimeBoundary(N_{rt}^{q}, t_{e})}, t_{v} \gets I_{y}^{q}$;
            \State $X \gets X \cup \mathit{getEntities(N_{rt}^{q}, -\infty, I_{y}^{q})}$;
            \State $rt \gets rt + y$;
            \State $q \gets q - 1$;
        \EndWhile
        \State return $X$
    \end{algorithmic}
\end{algorithm}
\vspace{-5pt}

\begin{example}
Fig.~\ref{fig:decompose} illustrates how a boundary search algorithm decomposes a range within a four-level HIGGS structure ($\theta = 4$). For time range $[t_{s}, t_{e}]$, starting from $N_{1}^{4}$, the search proceeds to nodes $N_{1}^{3}$ (containing $t_{s}$) and $N_{2}^{3}$ (containing $t_{e}$). Since no child nodes are fully within $[t_{s}, t_{e}]$, children of $N_{1}^{3}$ ($N_{3}^{2}$, $N_{4}^{2}$) and child $N_{8}^{1}$ of $N_{2}^{2}$ are queried and added to list $X$. Leaf node $N_{7}^{1}$, containing $t_{s}$, is also added. The process is mirrored for $t_{e}$, resulting in query list $X$ containing matrices $M_{7}^{1}, M_{8}^{1}, M_{3}^{2}, M_{4}^{2}, M_{17}^{1}, M_{18}^{1}$, and $M_{19}^{1}$.
\end{example}


\begin{figure}[htbp]
    \centering
    \vspace{-16pt}
    \includegraphics[width=0.9\linewidth]{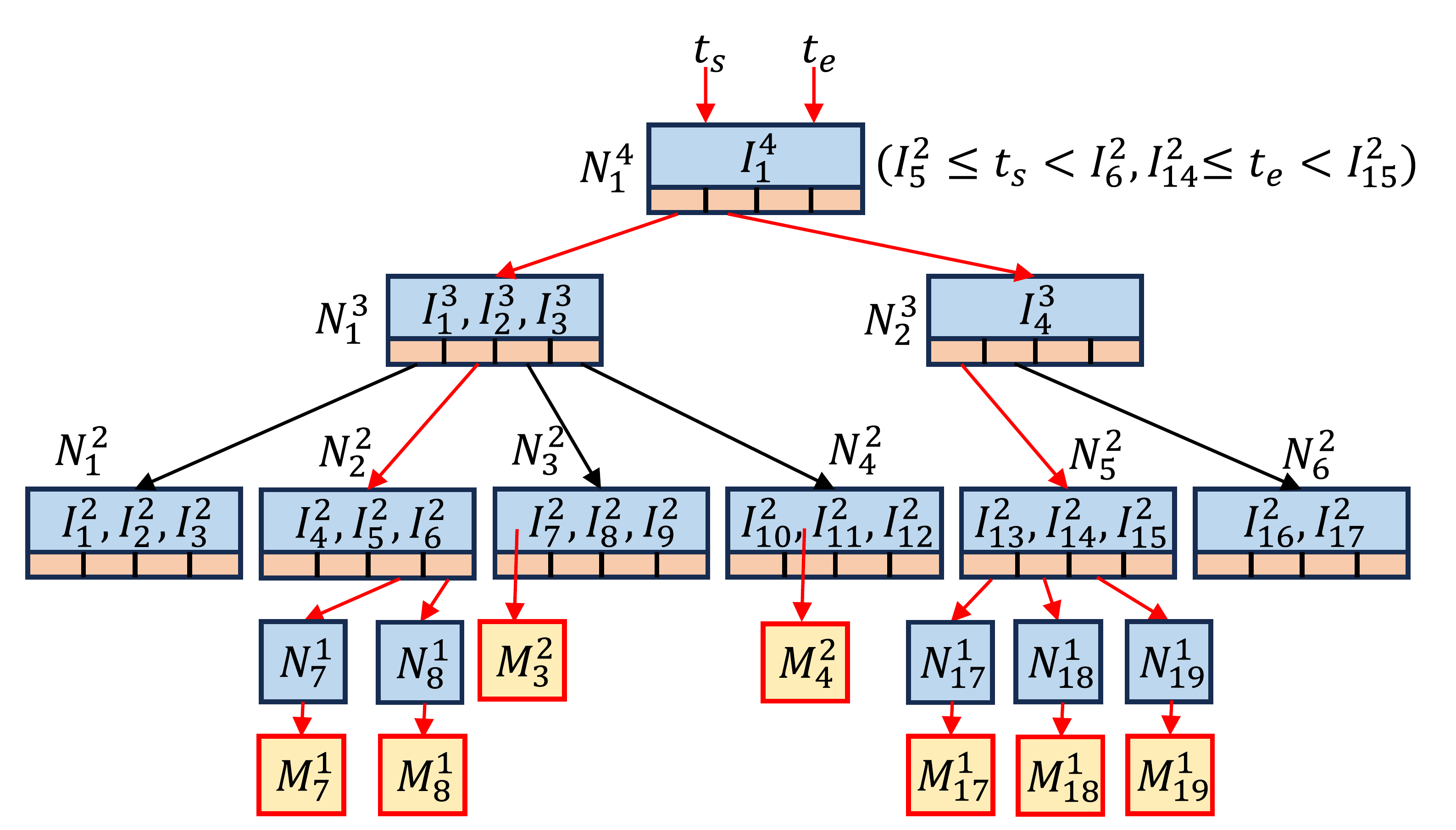}
    \vspace{-2pt}
    \caption{An Example of Range Decomposition using the Boundary Search Algorithm}
    \vspace{-10pt}
    \label{fig:decompose}
\end{figure}


Next, we proceed to discuss how the query framework is implemented for TRQ primitives, i.e., edge and vertex queries.

\textbf{Edge Queries.} For a temporal query range $[t_{s}, t_{e}]$, an edge query from $s$ to $d$ is conducted on matrix $M$. Using Formula~\eqref{eq:fanda} and a shift operation, we derive the fingerprint $(f(s),f(d))$ and address $(h(s),h(d))$, locating the bucket $M[h(s)][h(d)]$.

Two cases arise: 1) Non-leaf node: If $M$ is a non-leaf, we check if any entry's fingerprint matches $(f(s),f(d))$—if so, return its weight; otherwise, it returns $0$. 2) Leaf node: If $M$ is a leaf, we additionally check if the timestamp falls within $[t_{s}, t_{e}]$. If both the fingerprint and timestamp match, the entry is considered valid, and we proceed as in the non-leaf case.


\textbf{Vertex Queries.} For the temporal query range $[t_{s}, t_{e}]$, queries are made on a source or destination vertex $v$ based on matrix $M$ of size $d'$. Using Formula~\eqref{eq:fanda} and a shift operation, we ascertain $v$'s fingerprint $f(v)$ and address $h(v)$, then locate the corresponding row in $M$:
\begin{equation}
\small
    \hspace{-5pt} M[h(v)][:] = {M[h(v)][0], M[h(v)][1], ..., M[h(v)][d' - 1]}
\end{equation}
Each entry in the row's buckets is traversed. For non-leaf nodes, if the source vertex's fingerprint matches $f(v)$, the weight is added to the result. For leaf nodes, both the fingerprint must match and the timestamp must fall within $[t_{s}, t_{e}]$ for the weight to be included.

\subsection{Optimization}
\label{subsec:opt}
We propose three optimizations: {\it multiple mapping buckets} to enhance space efficiency, {\it overflow blocks} to improve accuracy, and {\it parallelization} to boost insertion throughput.

\textbf{Multiple Mapping Buckets (MMB)}.
When inserting an edge $e$, using only a single mapping bucket can lead to severe conflicts, potentially causing the insertion to fail and wasting significant space if the matrix is underutilized. To address this issue, we employ multiple mapping buckets, giving each edge multiple potential insertion positions. Specifically, we use the linear congruence method \cite{l1999tables} to generate address sequences $\{h_{i}(s)|1 \leq i \leq r\}$ for the source vertex $s$ and $\{h_{j}(d)|1 \leq j \leq r\}$ for the destination vertex $d$, where $r$ represents the number of mapping positions of a vertex.
The address sequences of the source and destination vertices are then combined pairwise to produce $r \times r$ mapping buckets. To track the position of each bucket in the address sequence, we record an index pair $(i,j)$. In this way, insertion only fails if all mapping buckets conflict, significantly reducing the likelihood of failure and improving space utilization of the matrix.

\textbf{Overflow Blocks (OB).} In the basic HIGGS framework, if an edge insertion at a leaf node fails, a new leaf is created to store the edge, and its timestamp is transmitted to the parent as a key to distinguish between leaf nodes. However, if multiple edges with identical timestamps have already been stored, this key becomes ineffective, leading to errors. To resolve this, we employ overflow blocks to aggregate edges that arrive simultaneously into a unified time range. If an edge insertion fails and has the same timestamp as the previous edge, an overflow block is created on the current leaf to store it; otherwise, a new leaf is created. The overflow block is essentially a small-scale compressed matrix, enabling more precise partitioning of the graph stream by timestamps and thereby enhancing query accuracy.

\textbf{Parallelization}. To enhance throughput, we adopt a parallel updating strategy, assigning each layer a separate thread that updates only the latest node. To maintain data consistency across the subtree, each element is first updated by the thread corresponding to leaf nodes before being processed by other threads. Since order preservation is needed only at the element level, parallel efficiency remains high. By utilizing parallelism, the throughput of HIGGS is increased substantially.

\section{ANALYSIS}
\subsection{Space Cost Analysis}

\textbf{Space Savings through Aggregation.} Compared to the storage methods employed in existing works, our design significantly reduces space costs. 

\begin{theorem}
Compared to the storage methods of the existing proposals, a HIGGS structure with $l$ layers reduces the space cost by a proportion of $\frac{(l - 1)R}{\beta}$, where $\beta$ denotes the size of each entry within the matrix.
\end{theorem}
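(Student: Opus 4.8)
The plan is to compare, layer by layer, the total space that HIGGS and an existing layered structure (e.g., Horae) devote to representing the same stream, and to isolate the saving that comes purely from the aggregation-induced fingerprint shrinkage. First I would fix the baseline: in the existing proposals every layer is a full-size sketch of the global stream, so each stored entry carries a full-length fingerprint pair and therefore has a uniform size $\beta$ at all $l$ layers. HIGGS stores the same logical data at each layer as well, since aggregation copies every edge once into its parent; the difference lies entirely in the per-entry size, which is what I will charge the saving against.

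The structural backbone is the observation that the total number of bucket slots per layer is invariant in HIGGS. Going up one layer, each node's matrix grows from $d_{i-1}\times d_{i-1}$ to $d_i\times d_i$ with $d_i=\sqrt{\theta}\,d_{i-1}$, i.e. the bucket count per node multiplies by $\theta$, while in the balanced/full regime the number of nodes drops by exactly the same factor, $n_i=n_{i-1}/\theta$. Hence $n_i d_i^2=n_1 d_1^2$ for every $i$, so each layer allocates the same number of entry slots, say $m=b\,n_1 d_1^2$. This reduces the space comparison to a per-entry comparison scaled by the common count $m$.

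Next I would quantify the per-entry saving. From the aggregation mechanism, the fingerprint length at layer $i$ is restricted to $F_1-(i-1)R$ bits per vertex, so an entry's source/destination fingerprint pair is shorter than the full-length pair by $2(i-1)R$ bits, while every other field (the weight, and at leaves the offset) is unchanged. Thus a HIGGS entry at layer $i$ has size $\beta-2(i-1)R$, against the uniform $\beta$ of the baseline. Summing over layers, the baseline cost is $m\sum_{i=1}^{l}\beta=m\,l\beta$, whereas HIGGS costs $m\sum_{i=1}^{l}\bigl(\beta-2(i-1)R\bigr)=m\bigl(l\beta-R\,l(l-1)\bigr)$, using $\sum_{i=1}^{l}(i-1)=l(l-1)/2$. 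The absolute saving is $m\,R\,l(l-1)$, and dividing by the baseline cost $m\,l\beta$ gives the claimed proportion $\frac{(l-1)R}{\beta}$; note that the common factors $m$ and $l$ cancel, so the result is independent of $|E|$ and of the tree width.

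The step I expect to be the main obstacle is justifying that the comparison is genuinely apples-to-apples: that both structures hold the same number of entries per layer and that HIGGS's aggregation neither creates spurious entries nor alters the stored content. Here I would lean on the earlier guarantee that aggregation is lossless — moving fingerprint bits into the address is a bijective relabelling, so it introduces no additional error and does not split or duplicate an edge's entry. A secondary care point is the treatment of the leaf-level offsets $t_i$, which non-leaf nodes omit: this is an extra saving not charged in the formula, so quoting $\frac{(l-1)R}{\beta}$ is conservative, and I would state explicitly that $\beta$ is taken as the common full-entry size, so that the bound reflects only the aggregation-based fingerprint reduction.
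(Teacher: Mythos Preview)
Your proposal is correct and follows essentially the same route as the paper's own proof: both compute the baseline cost as $b\,n_1 d_1^2\cdot l\beta$, the saving as $\sum_i 2(i-1)R\cdot b\,n_1 d_1^2 = b\,n_1 d_1^2\,R\,l(l-1)$, and take the ratio to obtain $\frac{(l-1)R}{\beta}$. Your argument is in fact more carefully justified than the paper's, since you explicitly establish the per-layer slot invariance $n_i d_i^2 = n_1 d_1^2$ (which the paper silently assumes) and address the apples-to-apples concern about aggregation being a bijective relabelling.
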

\begin{proof}
Consider a HIGGS structure with $k$ leaf nodes, each with a compressed matrix. For simplicity, we assume that $k$ is divisible by $\theta$. In this setting, using the storage method of the previous framework, the space consumed is:
\begin{equation}
\small
    M_{1} = bkl \beta d_{1}^{2}
\end{equation}
Through the aggregation of matrices and the compression of fingerprints, the saved space is:
\begin{equation}
\small
    M_{2} = \sum_{i = 0}^{l - 1} 2bkiRd_{1}^{2} = 2bkRd_{1}^{2}\sum_{i = 0}^{l-1} i = bkRd_{1}^{2}l(l-1)
\end{equation}
Thus, the ratio of space saved is:
\begin{equation}
\small
    \frac{M_{2}}{M_{1}} = \frac{bkRd_{1}^{2}l(l-1)}{bkl \beta d_{1}^{2}} = \frac{R(l-1)}{\beta}
\end{equation}
Thus, the theorem is proven.
\end{proof}

\vspace{-5pt}
\textbf{Matrix Utilization Rate.}
The utilization rate of a matrix refers to the proportion of elements stored within the matrix relative to its total capacity. As discussed in Section~\ref{subsec:opt}, for a $d \times d$ compressed matrix,
we enhance the utilization rate of the bucket by setting the number of mapping buckets for each edge to $p$. Simultaneously, each bucket contains $b$ entries.
An insertion failure indicates that all $p$ mapping buckets have encountered conflicts, while all previous edges were successfully inserted. Let $A_{i}$ denote the probability that $i$-{th} edge is inserted successfully, and $1 - A_{i}$ represent the probability of its failure. If $k$-{th} edge causes the first insertion failure, the probability of this occurrence, according to the geometric distribution, is:
\begin{align}
\small
    \begin{split}
        Pr(X = k) & = \prod_{i = 1}^{k - 1} A_{i} \times (1 - A_{k}) \\
                  & = \prod_{i = 1}^{k - 1} (1 - (\frac{i - 1}{bd^{2}})^{bp}) \times (\frac{k - 1}{bd^{2}})^{bp}
    \end{split}
\end{align}
Given that the compressed matrix containing $b \times d \times d$ elements, the expected utilization $E(\alpha)$ of the matrix is calculated as follows:

\begin{equation}
\small
    E(\alpha) = \frac{E(k)}{bd^{2}} = \frac{\sum_{k = 1}^{bd^{2}} k \cdot Pr(X = k)}{bd^{2}}
\end{equation}
Here, $E(k)$ represents the expected number of elements successfully inserted into the matrix. Therefore, the formula quantifies the average utilization of the storage block by considering the successful insertions relative to the total number of available elements of its associated compressed matrix.

\textbf{Space Complexity Analysis.}
The number of layers in HIGGS primarily depends on the quantity of leaf nodes and the out-degree of each node. Since the space occupied by a compressed matrix to which a non-leaf node points equals the cumulative space of the matrices of all its corresponding child nodes, the space overhead for each layer in HIGGS is fundamentally consistent.
Let the average space utilization rate of each matrix pointed to by the leaf nodes be $\alpha$, and let the space occupied by the keys in HIGGS be $I$. The number of matrices at the leaf nodes is given by $n_{1} = \frac{|E|}{\alpha b d_{1}^{2}}$, leading to a space complexity for HIGGS of approximately $O(|E| \log \frac{|E|}{\alpha b d_{1}^{2}} + I)$. Since the space required for storing keys is significantly less than that needed for storing graph stream data, the overall space cost is $O(|E| \log \frac{|E|}{\alpha b d_{1}^{2}})$. After optimization with overflow blocks, assuming the average time span represented by each matrix is $L'$ (where $1 \leq L' = \frac{\alpha b L d_{1}^{2}}{|E|} \leq L$), the formula can be derived as follows:
\begin{equation}
\small
    E\log n_{1} = E\log (\frac{L}{L'}) \leq E\log L
\end{equation}
$O(E \log L)$ represents the space complexity of the existing framework. This establishes that the space complexity of HIGGS is superior to that of the existing framework. Additionally, the fewer edges arrive per unit time in the graph stream, that is, the larger the value of $L'$, the lower the space cost becomes.

\subsection{Query Efficiency Analysis}
The time complexity of HIGGS depends on the number of matrices accessed at each layer. Assuming the length of a given temporal range query is $L_{q}$, at most $2(\theta - 1) \log_{\theta} \frac{L_{q}}{L'}$ matrices are queried, leading to a time complexity approximates O($\log \frac{L_{q}}{L'}$). 

Furthermore, the query at each layer in the existing framework targets the data structure that stores the entire graph stream, resulting in larger scales. Consequently, when addressing temporal range queries related to topological structures, this framework accesses a greater number of buckets, which leads to lower query efficiency. In contrast, the total size of all matrices to be queried in HIGGS is consistent with the size of the graph stream within that time range, making it smaller in scale and more efficient to access.


\subsection{Scalability Analysis}
As streaming edges continuously arrive, HIGGS increases the number of leaf nodes to store data while aggregating at non-leaf nodes. The number of layers in HIGGS is $O(\log n_{1})$, with a space complexity of $O(E\log n_{1})$, an update time complexity of $O(E\log n_{1})$, and a query time complexity of $O(\log \frac{L_{q}}{L'})$, where $n_{1}$ is the number of leaf nodes, $L_{q}$ is the query range length, and $L'$ is the average time range length represented by each leaf node. To accelerate updates, we employ parallel optimization (see Section~\ref{subsec:opt}), where each layer is assigned a dedicated thread. 

\subsection{Query Accuracy Analysis}

\textbf{Collision Rate Analysis.} In this section, we analyze the collision rates for nodes and edges maintained in HIGGS.
Moreover, we show that HIGGS has one-sided error, i.e., it only overestimates the true query results; it never underestimates results.

{\bf 1)} \underline{\it{Node Collision.}} For a given temporal query range \([t_{s},t_{e}]\) and an edge \(e_{i}=(s_{i}, d_{i}, w_{i}, t_{i})\), conflicts for \(e_{i}\)'s source (or destination) node arise only from other edges within this time range.
Specifically, if there exists an edge \(e_{j} = (s_{j}, d_{j}, w_{j}, t_{j})\) conflicting with \(e_{i}\), the following conditions must be met:
\begin{itemize}
\small
    \item \(s_{i} \neq s_{j}\) (or \(d_{i} \neq d_{j}\));
    \item \(h(s_{i}) = h(s_{j})\) (or \(h(d_{i}) = h(d_{j})\));
    \item \(t_{s} \leq t_{i}, t_{j} \leq t_{e}\).
\end{itemize}
Suppose that within the time range $[t_{s}, t_{e}]$, the number of source (or destination) nodes different from $s_{i}$ (or $d_{i}$) is denoted as $k$. The total number of distinct source (or destination) nodes in the entire graph stream is $K$, and the size of the value range of the hash function is $Z$, where $Z= d_{1} 2^{F_{1}}$. The probability of a node conflicting with $s_{i}$ is $\frac{1}{Z}$. Given that there are $k$ such nodes within $[t_{s}, t_{e}]$, the node conflict probability for $e_{i}$ is,
\begin{align}
\label{eq:node}
\small
    \begin{split}
        Pr(node) & = 1 - (1 - \frac{1}{Z})^{k} \approx 1 - e^{-\frac{k}{Z}}  \\
                 & \leq 1 - e^{-\frac{K}{Z}} = 1 - e^{-\frac{K}{d_{1} 2^{F_{1}}}}
    \end{split}
\end{align}
From the equation, it is evident that increasing the number of fingerprint bits $F_{1}$ or the size of the compression matrix $d_{1}$ in the first layer will reduce the probability of node conflicts.

{\bf 2)} \underline{\it{Edge Collision.}} Similar to node collision, when provided with a specified time range $[t_{s}, t_{e}]$ and an edge $e_{i}=(s_{i}, d_{i}, w_{i}, t_{i})$ for a query, edges outside beyond the time range will not conflict with $e_{i}$. If there is an edge $e_{j}$ that conflicts with $e_{i}$, it must satisfy the following conditions:
\begin{itemize}
\small
    \item $s_{i} \neq s_{j}$ or $d_{i} \neq d_{j}$
    \item $h(s_{i}) = h(s_{j})$ and $h(d_{i}) = h(d_{j})$
    \item $t_{s} \leq t_{i}, t_{j} \leq t_{e}$
\end{itemize}
Then, the collision probability of edge $e_{i}$ is:
\begin{align}
\small
    \begin{split}
        Pr(edge) & = 1 - (1 - \frac{1}{Z})^{S} \cdot (1 - \frac{1}{Z^2})^{C' - S}                          \\
                 & \approx 1 - e^{-\frac{(Z-1)S + C'}{Z^2}}                                                   \\
                 & \leq 1 - e^{-\frac{(Z-1)\max(\Phi_{o}, \Phi_{i}) + C}{Z^2}}                                      \\
                 & = 1 - e^{-\frac{(d_{1}2^{F_{1}}-1)\max(\Phi_{o}, \Phi_{i}) + C}{d_{1}^{2}4^{F_{1}}}}
    \end{split}
\end{align}Here, $S$ denotes the number of edges that, within the range $[t_{s}, t_{e}]$, share the same source or destination node as $e_{i}$ and have a probability of conflict with $e_{i}$ of $\frac{1}{Z}$. Conversely, the probability of conflict for the remaining edges, which are distinct from $e_{i}$, is $\frac{1}{Z^2}$. Additionally, $C'$ denotes the number of edges distinct from $e_{i}$ within this range. $\Phi_{o}$ and $\Phi_{i}$ respectively signify the maximum out-degree and in-degree across the entire graph stream, while $C$ denotes the count of distinct edges throughout the graph stream. Similar to node collision, increasing the fingerprint size or matrix size in the first layer proves effective in reducing the probability of conflicts.

\textbf{Error Bounds.}
Next, we analyze the error bounds for node and edge queries in HIGGS and the configuration of the fingerprint length $F_{1}$ and the matrix size $d_{1}$ of the leaf nodes.

{\bf 1)} \underline{\it{Error Bound for Vertex Queries.}} We use $\hat{w}_{node}$ and $w_{node}$ to represent the estimated value and the actual value for vertex queries within the range $[t_{s},t_{e}]$ of length $L_{q}$, respectively. Given a parameter $\varepsilon$, we set $F_{1} = \log \left(\frac{e}{d_{1}\varepsilon}\right)$, resulting in $Z = \frac{e}{\varepsilon}$. We denote the number of edges within the range $[t_{s},t_{e}]$ as $E'$ and the sum of their weights as $\|\mathit{w}\|'$.


\vspace{-5pt}

\begin{theorem}
\label{thm:erbnode}
The result $\hat{w}_{node}$ has the guarantee: $\hat{w}_{node} \leq {w}_{node} + \varepsilon \| \mathit{w} \|'$ with a probability of at least \(1 - e^{-1}\).
\end{theorem}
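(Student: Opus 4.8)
The plan is to follow the classic count-min-style argument, adapted to the hierarchical setting, and to exploit the one-sided error property already announced in the collision-rate analysis. First I would write the estimate as the truth plus a nonnegative noise term. Since a vertex query on $v$ sums the weights of all entries in the relevant row whose source (resp.\ destination) fingerprint equals $f(v)$, every genuine incident edge of $v$ within $[t_{s},t_{e}]$ is counted, so the estimate never underestimates; the only extra contributions come from foreign edges that collide with $v$. Hence I would define $X = \hat{w}_{node} - w_{node} \geq 0$ as the aggregate weight of colliding edges, which immediately gives the no-underestimation half and reduces the theorem to showing $\Pr[X > \varepsilon \|w\|'] \leq e^{-1}$.

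Next I would bound $E[X]$ by linearity of expectation. For each range edge $e_{j}$ whose source differs from $v$, it adds $w_{j}$ to $X$ exactly when its address and fingerprint both coincide with those of $v$; as derived in the node-collision analysis, this event has probability $1/Z$ with $Z = d_{1}2^{F_{1}}$. A key structural observation I would stress here is that although the boundary search routes different range edges into matrices at different layers, the value range $Z_{i} = d_{i}2^{F_{i}}$ is invariant across layers: each aggregation step shifts $R$ bits from the fingerprint into the address, multiplying $d_{i}$ by $2^{R}$ while dividing $2^{F_{i}}$ by the same factor, so every range edge collides with probability at most $1/Z$ regardless of where it is stored. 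Summing, $E[X] \leq \frac{1}{Z}\sum_{j} w_{j} \leq \frac{\|w\|'}{Z}$.

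Finally I would substitute the parameter choice. With $F_{1} = \log\!\left(\frac{e}{d_{1}\varepsilon}\right)$ we get $Z = d_{1}2^{F_{1}} = e/\varepsilon$, so $E[X] \leq \varepsilon\|w\|'/e$. Markov's inequality then yields
\begin{equation}
\small
\Pr[X > \varepsilon\|w\|'] \leq \frac{E[X]}{\varepsilon\|w\|'} \leq \frac{\varepsilon\|w\|'/e}{\varepsilon\|w\|'} = e^{-1},
\end{equation}
and taking complements gives $\hat{w}_{node} \leq w_{node} + \varepsilon\|w\|'$ with probability at least $1-e^{-1}$.

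The main obstacle I anticipate is not the Markov step but justifying that collisions neither compound nor are undercounted across the hierarchy. I would argue that the boundary-search decomposition assigns each range edge to exactly one queried matrix, so the foreign-edge contributions can be summed once without double counting, and that the layer-invariance of $Z$ (together with the earlier claim that aggregation introduces no additional error) lets me treat all $E'$ range edges uniformly under a single $1/Z$ collision bound. A secondary point worth noting is that the per-edge collision events are handled only through their expectations, so no independence assumption is required; linearity of expectation alone suffices, which keeps the argument clean.
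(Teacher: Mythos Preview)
Your proposal is correct and follows essentially the same route as the paper: write $\hat{w}_{node}=w_{node}+X$ with $X\ge 0$, bound $E[X]\le \|w\|'/Z=\varepsilon\|w\|'/e$ via the per-edge $1/Z$ collision probability and linearity of expectation, then apply Markov's inequality. Your explicit justification that $Z_{i}=d_{i}2^{F_{i}}$ is layer-invariant and that the boundary-search decomposition assigns each range edge to exactly one queried matrix is a useful elaboration that the paper's own proof leaves implicit.
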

\vspace{-5pt}
\begin{proof}
Consider two edges, \((x, y, w_{0}, t_{0})\) and \((s_{1}, d_{1}, w_{1}, t_{1})\), within the same time range \([t_{s},t_{e}]\). We introduce an indicator \(I_{x, s_{1}}\) which is one if there is a source node collision between the two edges, and zero otherwise. We can have the expectation of \(I_{x, s_{1}}\) as follows.
\begin{equation}
\small
    E(I_{x, s_{1}}) = Pr[h(x) = h(s_{1})] = 1/Z = \varepsilon/e
\end{equation}
Let \(X_{x}\) denote \(\sum_{i = 1}^{E'} I_{x, s_{i}}w_{i}\). Then, it follows that:\(\hat{w}_{node} = {w}_{node} + X_{x}\). By linear expectation,
\vspace{-5pt}
\begin{align}
\small
    \begin{split}
        E(X_{x}) & = E(\sum_{i = 1}^{E'} I_{x, s_{i}}w_{i}) \leq \sum_{i = 1}^{E'} w_{i}E(I_{x, s_{i}}) \\
                 & \leq (\varepsilon/e)\| \mathit{w} \|'
    \end{split}
    \vspace{-5pt}
\end{align}
By applying the Markov inequality, we have:
\begin{align}
\small
    \begin{split}
          & Pr[\hat{w}_{node} > {w}_{node} + \varepsilon \| \mathit{w} \|']     \\
        = & Pr[{w}_{node} + X_{x} > {w}_{node} + \varepsilon \| \mathit{w} \|'] \\
        = & Pr[X_{x} > \varepsilon \| \mathit{w} \|'] \leq Pr[X_{x} > eE(X_{x})] < e^{-1}
    \end{split}
\end{align}
Thus, Theorem~\ref{thm:erbnode} is proven.
The destination vertex query can be analyzed in a similar way.
\end{proof}

\vspace{-5pt}
{\bf 2)} \underline{\it{Error Bound for Edge Queries.}} Similar to vertex queries, We use $\hat{w}_{edge}$ and ${w}_{edge}$ to respectively denote the estimated value and the actual value for edge queries within the range \([t_{s},t_{e}]\) of length \(L_{q}\).


\begin{theorem}
\label{thm:erbedge}	
The result $\hat{w}_{edge}$ guarantees $\hat{w}_{edge} \leq {w}_{edge} + \varepsilon^2 \| \mathit{w} \|'/e$ with a probability of at least \(1 - e^{-1}\).
\end{theorem}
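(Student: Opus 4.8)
The plan is to mirror the argument used for Theorem~\ref{thm:erbnode}, adapting the collision indicator to capture the stricter condition required for an \emph{edge} collision. The starting point is the one-sided error property established in the collision rate analysis: HIGGS never underestimates, so we may write $\hat{w}_{edge} = {w}_{edge} + X$ where $X \geq 0$ accumulates the weights contributed by colliding edges within $[t_{s}, t_{e}]$. It therefore suffices to bound $Pr[X > \varepsilon^2 \| \mathit{w} \|'/e]$.

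First I would fix the queried edge and, for each of the $E'$ edges $e_{i}$ in the range, introduce an indicator $I_{i}$ that equals one precisely when $e_{i}$ maps into the same bucket as the queried edge, i.e.\ when $h(s_{i})$ matches the queried source address \emph{and} $h(d_{i})$ matches the queried destination address. The crucial difference from the vertex case is that an edge collision requires two simultaneous address matches rather than one. Since the source and destination hash values are computed independently and each matches with probability $1/Z = \varepsilon/e$, the expectation is $E(I_{i}) = 1/Z^2 = \varepsilon^2/e^2$. I would then set $X = \sum_{i = 1}^{E'} I_{i} w_{i}$ and apply linearity of expectation to obtain $E(X) \leq (\varepsilon^2/e^2)\| \mathit{w} \|'$, exactly paralleling the bound $E(X_{x}) \leq (\varepsilon/e)\| \mathit{w} \|'$ in the node case but with the probability squared.

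Finally, Markov's inequality with threshold $\varepsilon^2 \| \mathit{w} \|'/e$ closes the argument: since $e \cdot E(X) \leq \varepsilon^2 \| \mathit{w} \|'/e$, the event $\{X > \varepsilon^2 \| \mathit{w} \|'/e\}$ is contained in $\{X > e \cdot E(X)\}$, whence $Pr[X > \varepsilon^2 \| \mathit{w} \|'/e] \leq E(X)/(e \cdot E(X)) = e^{-1}$. With $Z = e/\varepsilon$ fixed by the choice $F_{1} = \log(e/(d_{1}\varepsilon))$, the arithmetic then matches the stated bound precisely.

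The main subtlety---rather than a genuine obstacle---is justifying the squared collision probability. One must argue that the source-address match and destination-address match are independent events, so that the dominant contribution comes from edges sharing neither endpoint with the queried edge. Edges sharing exactly one endpoint collide with the larger probability $1/Z$, but these form a lower-order term controlled by the maximum in/out-degree (precisely the $(Z-1)\max(\Phi_{o},\Phi_{i})$ contribution appearing in the edge collision rate bound) and do not affect the leading-order guarantee. Once this independence is established, the proof proceeds identically to the vertex case, and the destination-side query follows symmetrically.
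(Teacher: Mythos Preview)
Your proposal is correct and mirrors the paper's proof essentially step for step: introduce a per-edge collision indicator with expectation $1/Z^{2}=\varepsilon^{2}/e^{2}$, sum to get $E(X)\leq(\varepsilon^{2}/e^{2})\|w\|'$, and apply Markov at the threshold $e\cdot E(X)$. The paper does not in fact address the shared-endpoint subtlety you raise---it simply writes $E(I_{(x,y),(s_{1},d_{1})})=Pr[h(x)=h(s_{1})]Pr[h(y)=h(d_{1})]=1/Z^{2}$ uniformly---so your discussion is, if anything, slightly more careful than the original.
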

\begin{proof}
Consider two edges, \((x, y, w_{0}, t_{0})\)and \((s_{1}, d_{1}, w_{1}, t_{1})\), within the same time range \([t_{s},t_{e}]\). We introduce an indicator \(I_{(x,y),(s_{1},d_{1})}\), which is 1 if there is an edge collision between the two edges, and is 0, otherwise. We then have the expectation of \(I_{(x,y),(s_{1},d_{1})}\):
\begin{align}
\small
    \begin{split}
        E(I_{(x,y),(s_{1},d_{1})}) & = Pr[h(x) = h(s_{1})]Pr[h(y) = h(d_{1})] \\
                                   & = 1/Z^{2} = \varepsilon^{2}/e^{2}
    \end{split}
\end{align}
Let \(X_{x,y}\) denote \(\sum_{i = 1}^{E'} I_{(x,y),(s_{i},d_{i})}w_{i}\). By construction, \(\hat{w}_{edge} = {w}_{edge} + X_{x,y}\). By linear expectation, we have:
\begin{align}
\small
    \begin{split}
        E(X_{x,y}) & = E(\sum_{i = 1}^{E'} I_{(x,y), (s_{i},d_{i})}w_{i})                                     \\
                   & \leq \sum_{i = 1}^{E'} w_{i}E(I_{(x,y), (s_{i},d_{i})})                                  \\
                   & \leq (\varepsilon^{2}/e^{2})\| \mathit{w} \|'
    \end{split}
\end{align}
By applying the Markov inequality, we have:
\begin{align}
\small
    \begin{split}
          & Pr[\hat{w}_{edge} > {w}_{edge} + \varepsilon^{2} \| \mathit{w} \|'/e]           \\
        = & Pr[{w}_{edge} + X_{x,y} > {w}_{edge} + \varepsilon^{2} \| \mathit{w} \|'/e]     \\
        = & Pr[X_{x,y} > \varepsilon^{2} \| \mathit{w} \|'/e] \leq Pr[X_{x,y} > eE(X_{x,y})] < e^{-1} \hspace{-4pt}
    \end{split}
\end{align}
Thus, Theorem~\ref{thm:erbedge} is proven.
\end{proof}

\section{EXPERIMENTAL STUDY}
\label{sec:ret}

We have implemented HIGGS and made the code available as open source on GitHub\footnote{\url{https://anonymous.4open.science/r/HIGGS-1215/}}. We report on experiments with HIGGS on real-world graph streams to answer the following research questions.


\begin{itemize}
\item[\textbf{Q1.}]
Can HIGGS outperform SOTA methods, including PGSS, Horae, Horae-cpt, AuxoTime, and AuxoTime-cpt, in terms of the query accuracy, efficiency, and scalability, for different types of graph queries, including edge, node, subgraph, and path queries?

\item[\textbf{Q2.}] What is the space overhead, deletion throughout, insertion throughput and latency of all competitors in achieving the query performance as aforementioned?

\item[\textbf{Q3.}] What is the effect of proposed optimization techniques, such as multiple mapping buckets, parallelization, and overflow blocks, particularly concerning space overhead, throughput and accuracy, respectively?

\end{itemize}

\vspace{-5pt}
\subsection{Experimental Setup}

{\bf Baselines.}
In the field of graph stream summarization supporting temporal range queries, SOTA methods include PGSS and Horae, with Horae excelling in query accuracy.
However, their scalability is limited.
Therefore, we create stronger baselines by extending the SOTA graph stream summarization structure, i.e., Auxo, to support temporal range queries through the incorporation of Horae's range decomposition scheme, 
yielding AuxoTime.
Additionally, Horae has an optimized variant, Horae-cpt, aimed at reducing space overhead, which prompted us to also develop AuxoTime-cpt.
In the empirical study, we consider five competitive baselines, including PGSS, Horae, Horae-cpt, AuxoTime, and AuxoTime-cpt.


\begin{table}[ht]
\centering
\vspace{-5pt}
\caption{Summary of Datasets}
\vspace{-5pt}
\label{tab:datasets}
\begin{tabular}{l|c|c|c|c}
\toprule
\bf Dataset & \bf Nodes & \bf Edges & \bf Time Span & \bf Time Slice \\
\midrule
\midrule
\bf Lkml & 63,399 & 1,096,440 & 2006-2013 & 1 second \\
\bf Wikipedia talk & 2,987,535 & 24,981,163 & 2001-2015 & 1 second \\
\bf Stackoverflow & 2,601,977 & 63,497,050 & 2009-2016 & 1 second \\
\bottomrule
\end{tabular}
\end{table}
{\bf Datasets.}
We consider 3 real datasets commonly used for evaluating graph stream summarization, namely Lkml, Wiki-talk, and Stackoverflow, as shown in Table~\ref{tab:datasets}. 
The \textbf{Lkml} dataset \cite{kunegis2013konect} is a communication network of the Linux kernel mailing list, where nodes represent users (email addresses) and directed edges denote replies, encompassing 63,399 users and 1,096,440 replies. The \textbf{Wikipedia talk} (WT {\it in short}) \cite{kunegis2013konect} dataset captures the communication network of the English Wikipedia, with nodes representing users and edges indicating messages written on other users' talk pages, comprising 2,987,535 nodes and 24,981,163 messages. The \textbf{Stackoverflow} (SO {\it in short}) \cite{kunegis2013konect} dataset, collected from 2009 to 2016, represents an interaction network between users on Stack Overflow, featuring 2,601,977 nodes and 63,497,050 edges\footnote{The time slice for each dataset is set to 1 second to align with the time unit settings of each dataset.}.

{\bf Metrics.} The metrics primarily include \textit{average absolute error} (AAE), \textit{average relative error} (ARE), average query time, throughput, and space cost. AAE and ARE are commonly used for evaluating the accuracy of graph summarization, quantifying the error between the true values and the queried values. If there are \(p\) queries, each with a true value \(f_{i}\) and an estimated value \(\hat{f}_{i}\), then:
\begin{equation}
\small
  \mathrm{AAE} = \frac{1}{p} \sum_{i = 1}^{p} |f_{i} - \hat{f}_{i}| \text{ ~~~~~~~~}
  \mathrm{ARE} = \frac{1}{p} \sum_{i = 1}^{p} \frac{|f_{i} - \hat{f}_{i}|}{f_{i}}
\end{equation}
Average query time reflects the system’s performance on average query response for a series of query operations. Insertion throughput measures the number of elements processed per unit time, while insertion latency indicates the speed of processing insertions.
Space overhead indicates the amount of main memory required to store or process data in the system.

{\bf Configuration.} Experiments were done on a machine equipped with a 32-core 2.6 GHz Xeon CPU, 384 GB RAM, and a 960 GB SSD. Baseline parameters were set following corresponding papers. For HIGGS, we set the number of optional addresses for each node to $4$ and $b$ to $3$. Therefore, for each edge, $4$ bits are required to store the index pair, indicating the position. Due to the close correlation between $Z$ and the degree of conflicts, we configured parameter $d_{1}$ to $16$ and $F_{1}$ to $19$ to ensure that the $Z$ value of HIGGS aligns with those of the baselines. For vertex and edge queries, we vary the query range length $L_q$ from $10^1$ to $10^7$. For each $L_q$, we randomly generated 100K edge queries and 10K vertex queries.
For path and subgraph queries, the path length is set to $[1,7]$ and subgraph size is set to $[50,350]$. For each path length and subgraph size, 1K queries are randomly generated. 
Each value reported is the average of 1,000 runs.

\subsection{Performance on Edge and Vertex Queries}
\label{subsec:queryacc}

\textit{\textbf{1) AAE vs. L}}: Fig.~\ref{fig:edgeQuery} (a--c) examine the accuracy of edge queries for all competitors in terms of AAE. In all experiments, the AAE of HIGGS is notably lower than those of other baselines, and this is consistent across all the three real graph stream datasets, i.e., Lkml, Wikipedia talk, and StackOverflow. For example, when the length of the query range $L_q$ equals $10^4$, the AAE of HIGGS is about $5$ orders of magnitude lower than that of the second-best competitor on Stackoverflow (Fig.~\ref{fig:edgeAaeSub3}).
Remarkably, the AAE of HIGGS is almost zero on the Lkml dataset (Fig.~\ref{fig:edgeAaeSub1}), implying 100\% query accuracy.
The superior performance of HIGGS is attributed to the elimination of conflicts of streaming items in the upper layers.
In contrast, other methodologies experience an accumulation of errors as the query range expands, resulting in inaccuracies caused by the error accumulation in the query processing.
Since the number of structure layers and the length of the query range are logarithmically related, the AAE of these methods increases logarithmically as the query range expands.
However, our approach avoids the inter-layer error accumulation.
Even when the temporal range spans the entire graph stream, the conflicts in HIGGS are confined to the layer of leaf nodes.
This also explains why, when the query range is exceptionally large, the advantage of HIGGS over other competitors becomes smaller, since the task of TRQ degenerates into the task of count estimations of conventional (non-temporal) graph summaries like TCM. Although the accuracy of HIGGS is slightly lower, it remains orders of magnitude better than the accuracies of its competitors. Furthermore, the AAE of Horae-cpt and AuxoTime-cpt are higher than those of Horae and AuxoTime, respectively, as they decompose the temporal query range into more sub-ranges, leading to increased query conflicts and, consequently, a relative increase in AAE.

\textit{\textbf{2) ARE vs. L}}: Fig.~\ref{fig:edgeQuery} (d--f) depict the ARE results for edge queries across three datasets, as the query range length varies. Similar to the result observed in Fig.~\ref{fig:edgeQuery} (a--c), HIGGS consistently outperforms other baselines and maintains near-lossless performance
across all three datasets, showcasing its advantage. It is noteworthy that in Fig.~\ref{fig:edgeQuery} (d--f), the ARE of the baselines initially increase and then decrease.
This pattern is due to the relative error decreasing as the true edge weights increase with the lengthening of the query range.
\begin{figure*}[h]
  \vspace{-20pt}
  \centering
  \includegraphics[width=.7\textwidth, height=20pt]{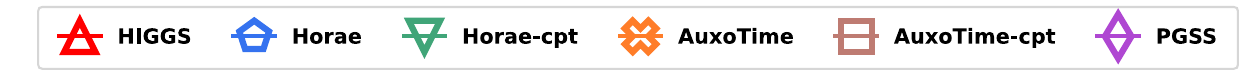}
  \begin{minipage}[b]{\columnwidth}
  \centering
      \input{exp-edge}
      \vspace{-14pt}
  \caption{Results on Edge Queries}
  \label{fig:edgeQuery}
  \end{minipage}
  \hfill
  \begin{minipage}[b]{\columnwidth}
  \centering
      \input{exp-node}
      \vspace{-14pt}
  \caption{Results on Vertex Queries}
  \label{fig:nodeQuery}
  \end{minipage}
  \centering
  \begin{minipage}[b]{\columnwidth}
  \centering
      \input{exp-path}
      \vspace{-14pt}
  \caption{Results on Path Queries}
  \label{fig:pathQuery}
  \end{minipage}
  \hfill
  \begin{minipage}[b]{\columnwidth}
  \centering
      \input{exp-subgraph}
      \vspace{-14pt}
  \caption{Results on Subgraph Queries}
  \label{fig:subgraphQuery}
  \end{minipage}
  \hfill
  \begin{minipage}[t]{\columnwidth}
  \centering
      \input{irSkew}
      \vspace{-12pt}
  \caption{Vertex Queries and Update Cost by Skewness}
  \label{fig:vqSkew}
  \end{minipage}
  \hfill
  \begin{minipage}[t]{\columnwidth}
  \centering
      \input{irVar}
      \vspace{-12pt}
  \caption{Vertex Queries and Update Cost by Variance}
  \label{fig:vqVar}
  \end{minipage}
  \vspace{-12pt}
\end{figure*}

\textit{\textbf{3) Latency vs. L}}: Fig.~\ref{fig:edgeQuery} (g--i) illustrate the relationship between the query latency and the length of the query range, across all datasets.
It shows that HIGGS and PGSS markedly outperform other baselines: when the query range length $L_{q}$
is set to $10^4$, HIGGS surpasses AuxoTime by an order of magnitude and improves on Horae by two orders of magnitude.
The superior performance of HIGGS stems primarily from its ability to access fewer summary layers given the same query range, and each layer accesses fewer nodes/buckets compared to AuxoTime and Horae.
PGSS exhibits competitive query latency (Fig.~\ref{fig:edgeQuery}(g--i)), but it falls short in query accuracy (Fig.~\ref{fig:edgeQuery}(a--f)).
Additionally, Horae's query efficiency is compromised by excessive accesses to the buffer structure, degrading its performance, as is the case with AuxoTime.
Moreover, the query latency of AuxoTime-cpt and Horae-cpt exceed those of AuxoTime and Horae, respectively, because they decompose the query range into more sub-ranges, increasing the query complexity from $O(\log L)$ to $O(\log^{2} L)$.

The performance of vertex queries is reported in Fig.~\ref{fig:nodeQuery}, where the trend is similar to that of edge queries. The details are thus omitted for brevity.

\begin{figure*}[ht]
    \centering
    \includegraphics[width=.7\textwidth, height=20pt]{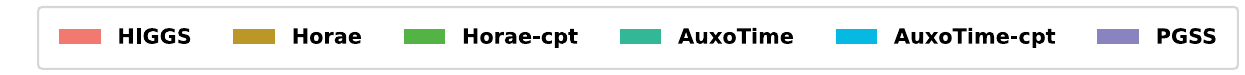}
    
    \begin{minipage}[t]{.33\columnwidth}
        \centering
        \includegraphics[width=\linewidth]{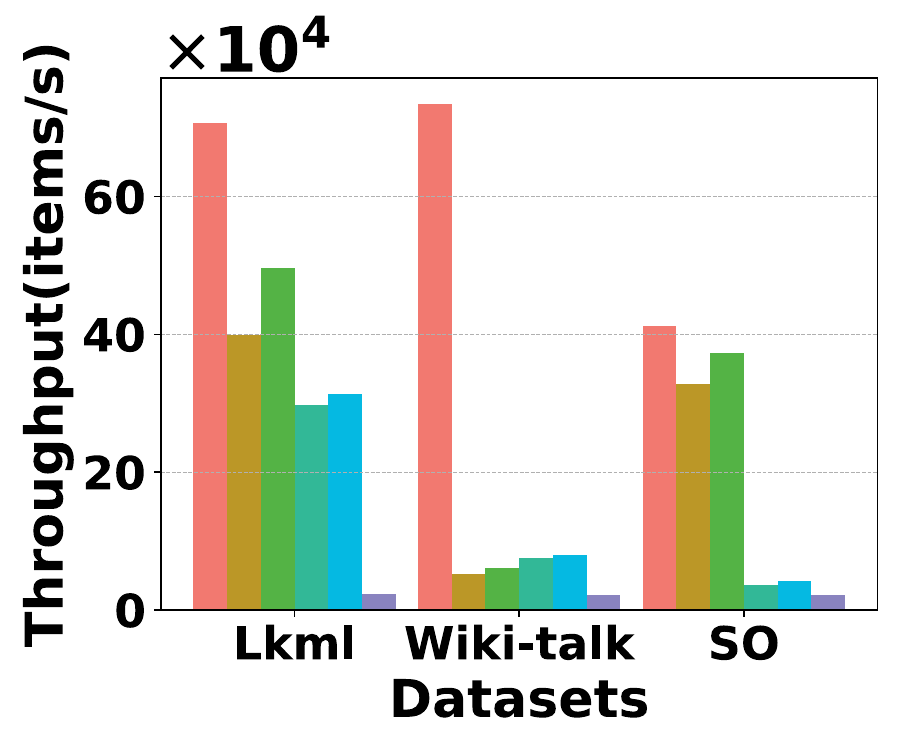}
        \vspace{-17pt}
        \caption{Insertion\\Throughput}
        \label{fig:insertThr}
    \end{minipage}
    \hfill
    \begin{minipage}[t]{.33\columnwidth}
        \centering
        \includegraphics[width=\linewidth]{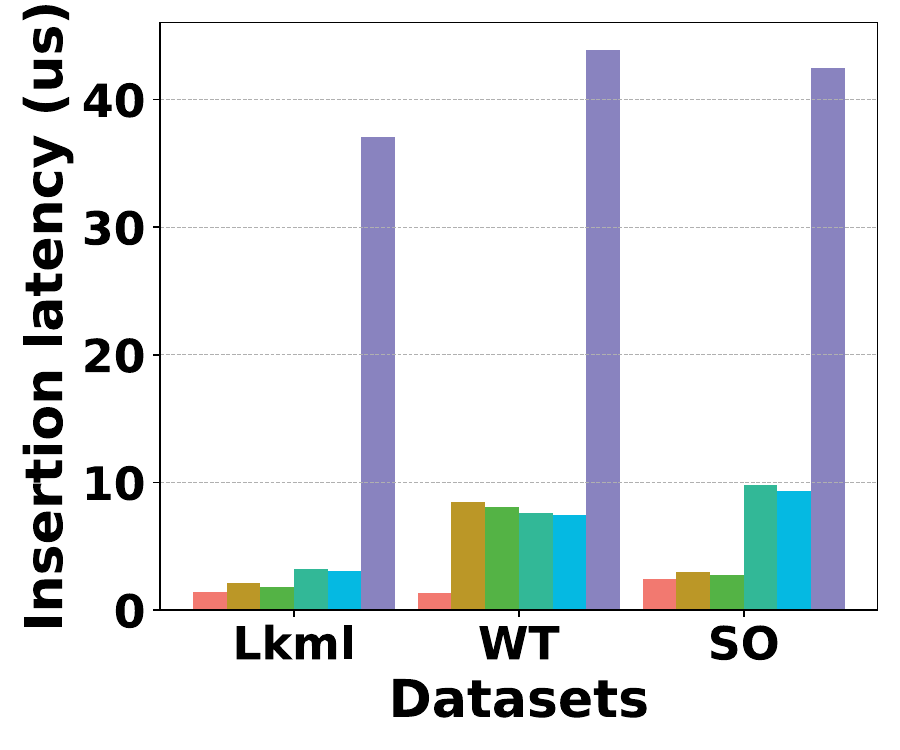}
        \vspace{-17pt}
        \caption{Insertion\\Latency}
        \label{fig:insertLat}
    \end{minipage}
    \hfill
    \begin{minipage}[t]{.33\columnwidth}
        \centering
        \includegraphics[width=\linewidth]{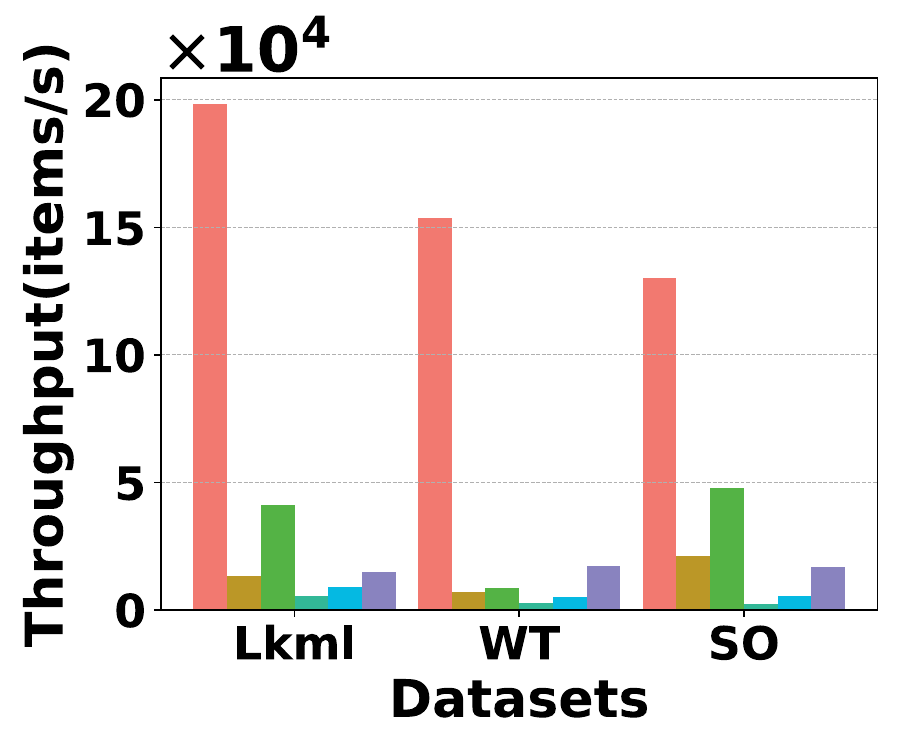}
        \vspace{-17pt}
        \caption{Deletion\\Throughput}
        \label{fig:deleteThr}
    \end{minipage}
    \hfill
    \begin{minipage}[t]{.33\columnwidth}
        \centering
        \includegraphics[width=\linewidth]{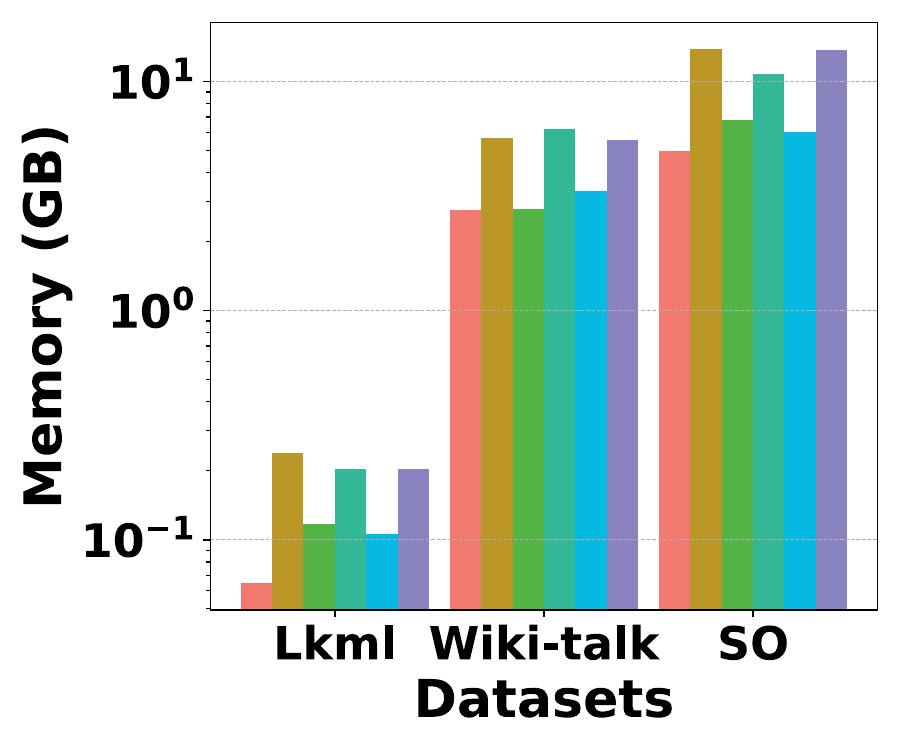}
        \vspace{-17pt}
        \caption{Space Cost}
        \label{fig:spaceCost}
    \end{minipage}
    \hfill
    \begin{minipage}[t]{.66\columnwidth}
      \begin{subfigure}{.5\textwidth}
        \centering
        \includegraphics[width=\linewidth]{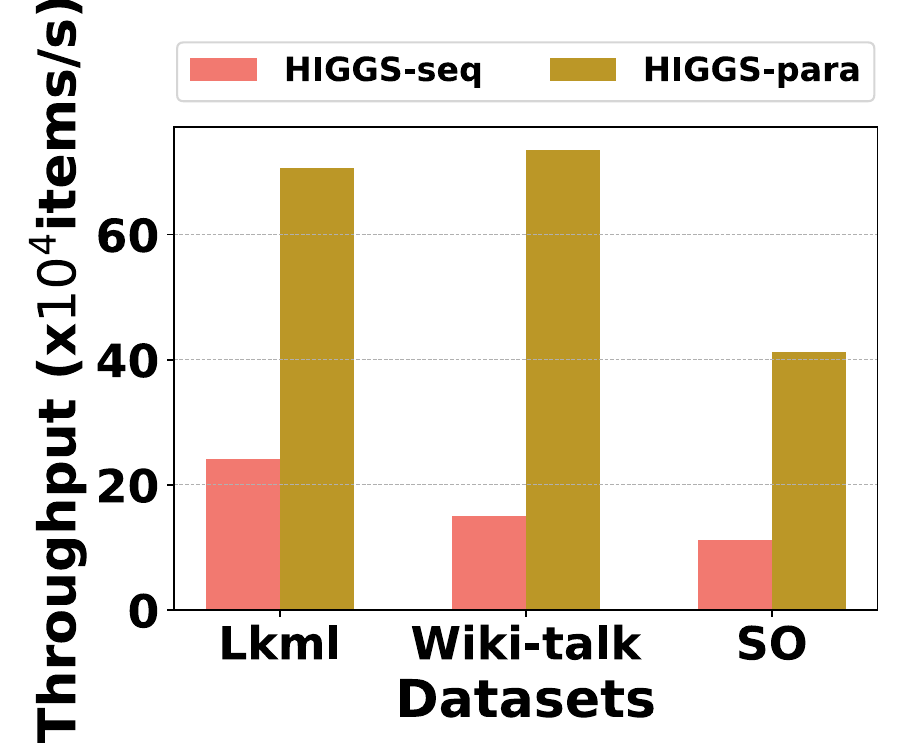}
        \vspace{-15pt}
        \caption{Throughput}
        \label{fig:optThroughput}
      \end{subfigure}\hfill
      \begin{subfigure}{.5\textwidth}
        \centering
        \includegraphics[width=\linewidth]{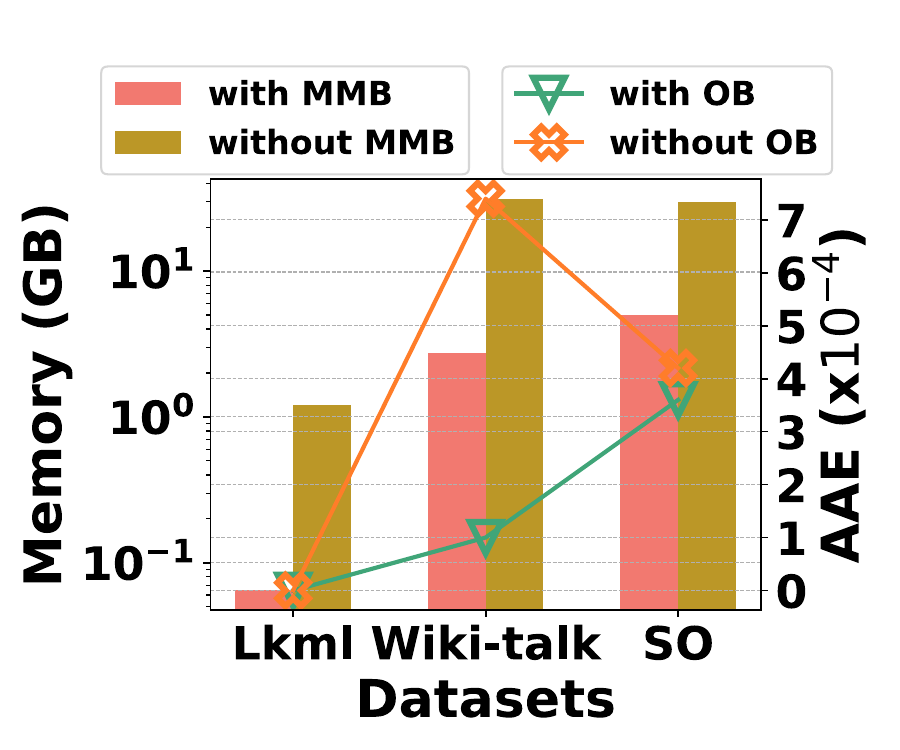}
        \vspace{-15pt}
        \caption{Space Cost \& AAE}
        \label{fig:optMemoryAae}
      \end{subfigure}\hfill
      \vspace{-6pt}
       \caption{Optimization}
       \label{fig:opt}
    \end{minipage}
    \vspace{-16pt}
\end{figure*}

\begin{figure}[ht]
    \begin{subfigure}{.48\columnwidth}
        \centering
      \includegraphics[width=\linewidth]{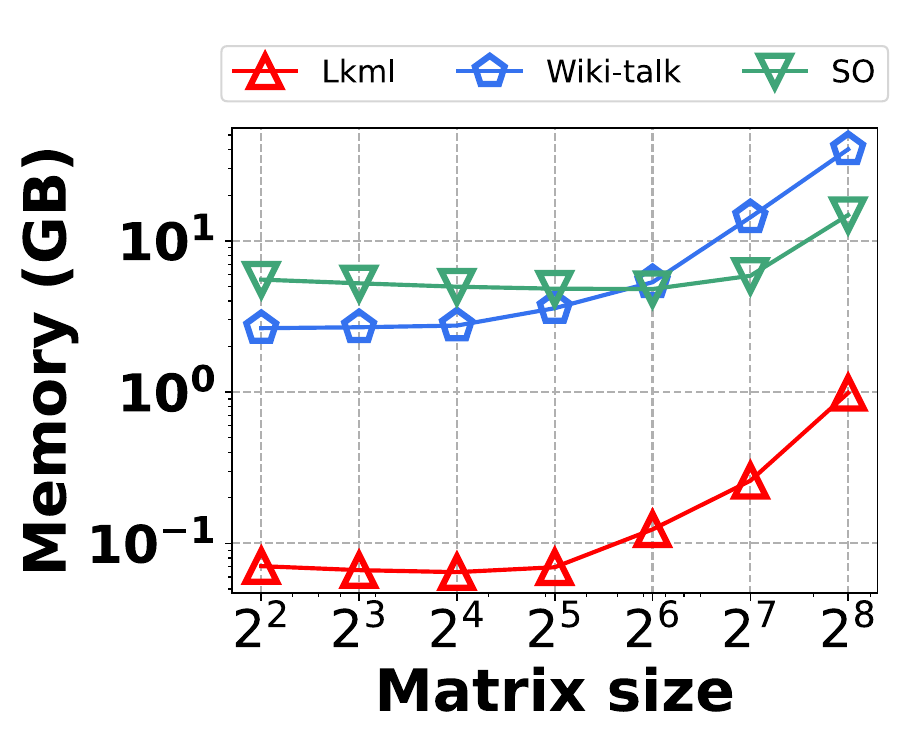}
      \vspace{-15pt}
      \caption{Space Cost vs. $d_{1}$}
      \label{fig:memoryHIGGS}
    \end{subfigure}\hfill
    \begin{subfigure}{.48\columnwidth}
        \centering
      \includegraphics[width=\linewidth]{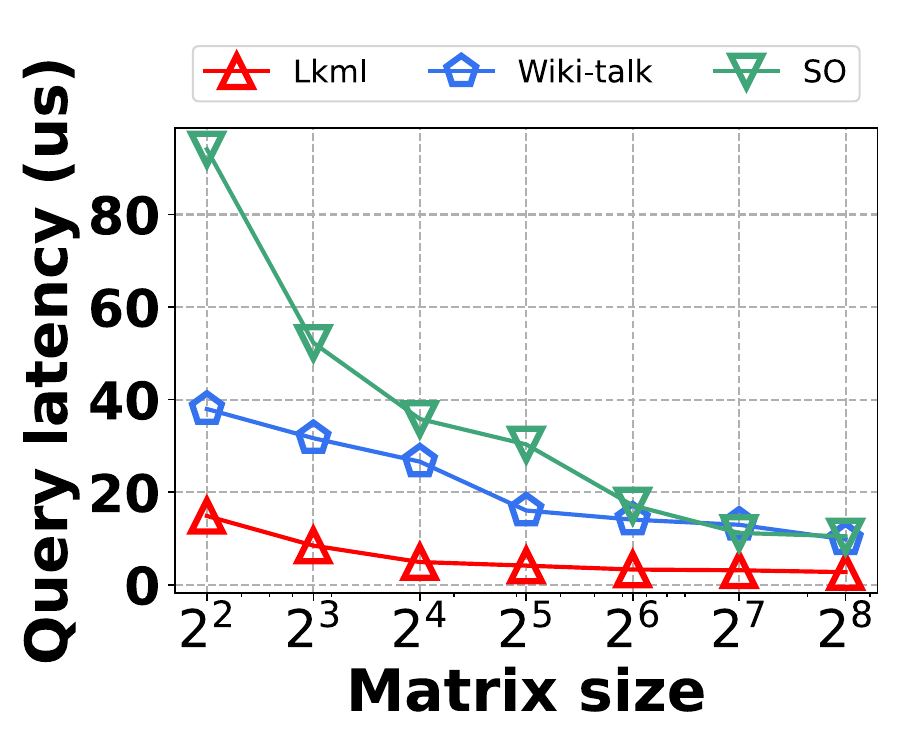}
      \vspace{-15pt}
      \caption{Query Latency vs. $d_{1}$}
      \label{fig:matrixSizeQuery}
    \end{subfigure}\hfill
    \vspace{-5pt}
     \caption{Parameter Analysis}
      \label{fig:matrixSizeQuery}
      \vspace{-18pt}
\end{figure}

\subsection{Performance on Path and Subgraph Queries}
\textit{\textbf{1) AAE vs. L}}: Fig.~\ref{fig:pathQuery} (a--c) report the performance on path queries, in particular, the relation between the accuracy (AAE) and the path length (number of hops), with the temporal range set to $10^5$. Compared to its competitors, HIGGS exhibits near-lossless performance across all three datasets, showing a clear advantage that grows with the number of hops in the path. This is due to HIGGS's cumulative precision benefits in edge queries, which are amplified in path queries since each path query consists of multiple edge queries. Additionally, Horae and AuxoTime demonstrate similar AAE, each outperforming their respective variants, Horae-cpt and AuxoTime-cpt. The absence of fingerprint utilization in PGSS for identifying stored edges leads to significant conflicts and a noticeable decrease in accuracy.

\textit{\textbf{2) ARE vs. L}}: Fig.~\ref{fig:pathQuery} (d--f) show the trends of the values of ARE for path queries across three datasets as the number of hops increases, with the temporal range set to $10^{5}$. Consistent with the AAE trend, HIGGS maintains near 100\% query accuracy, significantly outperforming the competitors.

\textit{\textbf{3) Latency vs. L}}: Fig.~\ref{fig:pathQuery} (g--i) depict a comparative latency analysis of path queries among all the competitors and  datasets. HIGGS consistently leads, e.g., on Stackoverflow, it outperforms PGSS by 2.2x, AuxoTime by 30.8x, and Horae by two orders of magnitude when the path is 4 hops. The latency disadvantages of Horae-cpt and AuxoTime-cpt in edge querying are magnified here.

Fig.~\ref{fig:subgraphQuery} reports the performance on subgraph queries, including the accuracy (AAE\&ARE) and query latency. The results are quite similar to that on path queries, and are omitted due to page limits.

\subsection{Performance on Graph Stream Irregularity} 
The irregularity in graph streams arises from skewed vertex degree distributions (skewness) and imbalanced edge arrivals (variance)\footnote{
We synthesize 6 datasets with varied skewness indicated by power-law exponents ranging from 1.5 to 3.0,
and 6 datasets with varied variances from 600 to 1,600.
Each dataset contains 100K nodes and 5M edges.}.
Fig.~\ref{fig:vqSkew} presents the results of vertex queries and update cost under varied skewness. HIGGS significantly outperforms other baselines in AAE, achieving zero error at a skewness of 2.4, while the best alternative has an AAE of around 10. For query latency, HIGGS leads by approximately two orders of magnitude over the runner-up. Regarding space overhead and throughput at a skewness of 2.4, HIGGS exceeds Horae by 3.2 and 5.3 times, Horae-cpt by 1.6 and 7.8 times, Auxo by 3.1 and 2.9 times, Auxo-cpt by 1.6 and 2.5 times, and PGSS by 2.8 and 19.2 times. 
Fig.~\ref{fig:vqVar} shows the performance under different variances, with HIGGS still far ahead. Due to the page limitation, further details are omitted.

\subsection{Performance on Insertion Throughput and Latency}
Fig.~\ref{fig:insertThr} compares the performance of insertion throughput for different methods on different datasets.
HIGGS continues to hold a substantial lead, significantly outperforming all competitors. In particular, on Stackoverflow, it leads Horae, AuxoTime, Horae-cpt, AuxoTime-cpt, and PGSS by 1.3x, 11.2x, 1.1x, 9.7x, and 19.3x, respectively.
Additionally, Horae-cpt and AuxoTime-cpt achieve slightly higher throughput compared to Horae and AuxoTime,  because they store less data except the bottom layer, thus requiring fewer updates to stream item insertion. Fig.~\ref{fig:insertLat} presents a comparison of insertion latency among different methods. HIGGS outperforms Horae, Horae-cpt, AuxoTime, AuxoTime-cpt, and PGSS on Stackoverflow by 1.2x, 1.1x, 4.0x, 3.8x, and 17.4x, respectively.

\subsection{Performance on Deletion Throughput}
Fig.~\ref{fig:deleteThr} illustrates the deletion throughput performance across three datasets, where HIGGS consistently outperforms other baselines. On Stack Overflow, HIGGS is faster than Horae by 6.1x, Horae-cpt by 2.7x, AuxoTime by 56.0x, AuxoTime-cpt by 24.3x, and PGSS by 7.7x.

\subsection{Performance on Space Cost} Fig.~\ref{fig:spaceCost} presents the space overhead for different methods across different datasets.
It shows that HIGGS consistently achieves the lowest space overhead on all datasets. Specifically, on Stackoverflow, it achieves a space savings of 26.8\% compared to Horae-cpt, 17.5\% compared to AuxoTime-cpt, 64.1\% over Horae, 53.9\% against AuxoTime, and 63.6\% compared to PGSS.
This implies that HIGGS uses less space to achieve markedly better query performance.

\subsection{Optimization}
Fig.~\ref{fig:opt} evaluates the effectiveness of proposed optimizations of parallelization, multiple mapping buckets (MMB), and overflow buckets (OB).
Fig.~\ref{fig:opt}
(a) shows the throughput of HIGGS with and without the parallelization optimization across three datasets, where HIGGS's throughput has increased at least 3x by adopting parallelization.
In Fig.~\ref{fig:opt} (b), it shows that employing MMB substantially improves space efficiency and incorporating OB improves accuracy. For example, on StackOverflow, the MMB mechanism brings in about 6x increase in space efficiency compared to its absence, while the OB mechanism leads to a 14.3\% improvement in accuracy. 


\subsection{Parameters}
Fig.~\ref{fig:matrixSizeQuery} investigates the impact of parameters about matrix size at leaf nodes on space overhead and query latency.
The results demonstrate that typically, larger matrices at the leaf nodes lead to higher space cost but reduced query latency. For instance, with a matrix size of 16, the space overhead on Stackoverflow is 5,085MB, and the query latency is 35.8 $\mu$s, which is significantly outperforming other competitors. Therefore, we recommend setting $d_{1}$ to 16 to achieve a good balance between the space overhead and query latency.


\section{Conclusion}
We propose HIGGS, a novel item-based, bottom-up hierarchical structure designed for summarizing graph streams with temporal information. HIGGS leverages its hierarchical design to localize storage and query processing, effectively confining changes and hash conflicts within smaller, manageable sub-trees. Compared to existing approaches, HIGGS achieves a notable enhancement in overall performance, supported by both a theoretical analysis and empirical studies. Extensive experiments on real graph streams show that HIGGS can improve accuracy by more than $3$ orders of magnitude, reduce space overhead by an average of $30$\%, increase throughput by $5$+ times, and reduce query response time by nearly $2$ orders of magnitude. In future research, it is of interest to extend HIGGS to support advanced graph stream variants, including foundational variants like heterogeneous graph streams with diverse node and edge types, as well as application-specific variants such as spatiotemporal, web, and social graph streams.

\section*{Acknowledgments}
This work was supported by the NSFC under Grants 62472400 and 62072428. Xike Xie is the corresponding author.


\newpage

\bibliographystyle{IEEEtran}
\bibliography{reference.bib}

\begin{thebibliography}{10}
\providecommand{\url}[1]{#1}
\csname url@samestyle\endcsname
\providecommand{\newblock}{\relax}
\providecommand{\bibinfo}[2]{#2}
\providecommand{\BIBentrySTDinterwordspacing}{\spaceskip=0pt\relax}
\providecommand{\BIBentryALTinterwordstretchfactor}{4}
\providecommand{\BIBentryALTinterwordspacing}{\spaceskip=\fontdimen2\font plus
\BIBentryALTinterwordstretchfactor\fontdimen3\font minus \fontdimen4\font\relax}
\providecommand{\BIBforeignlanguage}[2]{{%
\expandafter\ifx\csname l@#1\endcsname\relax
\typeout{** WARNING: IEEEtran.bst: No hyphenation pattern has been}%
\typeout{** loaded for the language `#1'. Using the pattern for}%
\typeout{** the default language instead.}%
\else
\language=\csname l@#1\endcsname
\fi
#2}}
\providecommand{\BIBdecl}{\relax}
\BIBdecl

\bibitem{sarmento2016social}
R.~Sarmento, M.~Oliveira, M.~Cordeiro, S.~Tabassum, and J.~Gama, ``Social network analysis in streaming call graphs,'' \emph{Big data analysis: new algorithms for a new society}, pp. 239--261, 2016.

\bibitem{aggarwal2014evolutionary}
C.~Aggarwal and K.~Subbian, ``Evolutionary network analysis: A survey,'' \emph{ACM Computing Surveys}, vol.~47, no.~1, pp. 1--36, 2014.

\bibitem{long2011towards}
R.~Long, H.~Wang, Y.~Chen, O.~Jin, and Y.~Yu, ``Towards effective event detection, tracking and summarization on microblog data,'' in \emph{International Conference on Web Age Information Management}, 2011, pp. 652--663.

\bibitem{cordeiro2016online}
M.~Cordeiro and J.~Gama, ``Online social networks event detection: a survey,'' \emph{Solving Large Scale Learning Tasks. Challenges and Algorithms: Essays Dedicated to Katharina Morik on the Occasion of Her 60th Birthday}, pp. 1--41, 2016.

\bibitem{gou2019fast}
X.~Gou, L.~Zou, C.~Zhao, and T.~Yang, ``Fast and accurate graph stream summarization,'' in \emph{International Conference on Data Engineering}, 2019, pp. 1118--1129.

\bibitem{chen2022horae}
M.~Chen, R.~Zhou, H.~Chen, J.~Xiao, H.~Jin, and B.~Li, ``Horae: A graph stream summarization structure for efficient temporal range query,'' in \emph{International Conference on Data Engineering}, 2022, pp. 2792--2804.

\bibitem{jiang2023auxo}
Z.~Jiang, H.~Chen, and H.~Jin, ``Auxo: A scalable and efficient graph stream summarization structure,'' \emph{Proceedings of the VLDB Endowment}, vol.~16, no.~6, pp. 1386--1398, 2023.

\bibitem{ke2022multi}
X.~Ke, A.~Khan, and F.~Bonchi, ``Multi-relation graph summarization,'' \emph{ACM Transactions on Knowledge Discovery from Data}, vol.~16, no.~5, pp. 1--30, 2022.

\bibitem{tian2021exploiting}
Z.~Tian, Y.~Liu, J.~Sun, Y.~Jiang, and M.~Zhu, ``Exploiting group information for personalized recommendation with graph neural networks,'' \emph{ACM Transactions on Information Systems}, vol.~40, no.~2, pp. 1--23, 2021.

\bibitem{ficel2021graph}
H.~Ficel, M.~R. Haddad, and H.~B. Zghal, ``A graph-based recommendation approach for highly interactive platforms,'' \emph{Expert Systems with Applications}, vol. 185, p. 115555, 2021.

\bibitem{gaber2005mining}
M.~M. Gaber, A.~Zaslavsky, and S.~Krishnaswamy, ``Mining data streams: a review,'' \emph{ACM SIGMOD Record}, vol.~34, no.~2, pp. 18--26, 2005.

\bibitem{fang2021mdtp}
Z.~Fang, L.~Pan, L.~Chen, Y.~Du, and Y.~Gao, ``{MDTP}: A multi-source deep traffic prediction framework over spatio-temporal trajectory data,'' \emph{Proceedings of the VLDB Endowment}, vol.~14, no.~8, pp. 1289--1297, 2021.

\bibitem{gou2021sliding}
X.~Gou and L.~Zou, ``Sliding window-based approximate triangle counting over streaming graphs with duplicate edges,'' in \emph{International Conference on Management of Data}, 2021, pp. 645--657.

\bibitem{li2019time}
Y.~Li, L.~Zou, M.~T. {\"O}zsu, and D.~Zhao, ``Time constrained continuous subgraph search over streaming graphs,'' in \emph{International Conference on Data Engineering}, 2019, pp. 1082--1093.

\bibitem{pacaci2020regular}
A.~Pacaci, A.~Bonifati, and M.~T. {\"O}zsu, ``Regular path query evaluation on streaming graphs,'' in \emph{ACM SIGMOD International Conference on Management of Data}, 2020, pp. 1415--1430.

\bibitem{pacaci2022evaluating}
------, ``Evaluating complex queries on streaming graphs,'' in \emph{IEEE International Conference on Data Engineering}, 2022, pp. 272--285.

\bibitem{aggarwal2010dense}
C.~C. Aggarwal, Y.~Li, P.~S. Yu, and R.~Jin, ``On dense pattern mining in graph streams,'' \emph{Proceedings of the VLDB Endowment}, vol.~3, no. 1-2, pp. 975--984, 2010.

\bibitem{mcgregor2014graph}
A.~McGregor, ``Graph stream algorithms: a survey,'' \emph{ACM SIGMOD Record}, vol.~43, no.~1, pp. 9--20, 2014.

\bibitem{suzumura2014towards}
T.~Suzumura, S.~Nishii, and M.~Ganse, ``Towards large-scale graph stream processing platform,'' in \emph{International Conference on World Wide Web}, 2014, pp. 1321--1326.

\bibitem{aggarwal2011outlier}
C.~C. Aggarwal, Y.~Zhao, and S.~Y. Philip, ``Outlier detection in graph streams,'' in \emph{IEEE International Conference on Data Engineering}, 2011, pp. 399--409.

\bibitem{aggarwal2003framework}
C.~C. Aggarwal, S.~Y. Philip, J.~Han, and J.~Wang, ``A framework for clustering evolving data streams,'' in \emph{International Conference on Very Large Data Bases}, 2003, pp. 81--92.

\bibitem{wu2006high}
E.~Wu, Y.~Diao, and S.~Rizvi, ``High-performance complex event processing over streams,'' in \emph{ACM SIGMOD International Conference on Management of Data}, 2006, pp. 407--418.

\bibitem{tang2016graph}
N.~Tang, Q.~Chen, and P.~Mitra, ``Graph stream summarization: From big bang to big crunch,'' in \emph{ACM SIGMOD International Conference on Management of Data}, 2016, pp. 1481--1496.

\bibitem{khan2016query}
A.~Khan and C.~Aggarwal, ``Query-friendly compression of graph streams,'' in \emph{IEEE/ACM International Conference on Advances in Social Networks Analysis and Mining}, 2016, pp. 130--137.

\bibitem{jia2023persistent}
Y.~Jia, Z.~Gu, Z.~Jiang, C.~Gao, and J.~Yang, ``Persistent graph stream summarization for real-time graph analytics,'' \emph{World Wide Web}, vol.~26, no.~5, pp. 2647--2667, 2023.

\bibitem{cormode2005improved}
G.~Cormode and S.~Muthukrishnan, ``An improved data stream summary: the count-min sketch and its applications,'' \emph{Journal of Algorithms}, vol.~55, no.~1, pp. 58--75, 2005.

\bibitem{charikar2002finding}
M.~Charikar, K.~Chen, and M.~Farach-Colton, ``Finding frequent items in data streams,'' in \emph{International Colloquium on Automata, Languages, and Programming}, 2002, pp. 693--703.

\bibitem{estan2002new}
C.~Estan and G.~Varghese, ``New directions in traffic measurement and accounting,'' in \emph{Conference on Applications, Technologies, Architectures, and Protocols for Computer Communications}, 2002, pp. 323--336.

\bibitem{cohen2003spectral}
S.~Cohen and Y.~Matias, ``Spectral bloom filters,'' in \emph{ACM SIGMOD International Conference on Management of Data}, 2003, pp. 241--252.

\bibitem{deng2007new}
F.~Deng and D.~Rafiei, ``New estimation algorithms for streaming data: Count-min can do more,'' \emph{Webdocs. Cs. Ualberta. Ca}, 2007.

\bibitem{yang2017pyramid}
T.~Yang, Y.~Zhou, H.~Jin, S.~Chen, and X.~Li, ``Pyramid sketch: A sketch framework for frequency estimation of data streams,'' \emph{Proceedings of the VLDB Endowment}, vol.~10, no.~11, pp. 1442--1453, 2017.

\bibitem{zhang2020off}
Y.~Zhang, J.~Li, Y.~Lei, T.~Yang, Z.~Li, G.~Zhang, and B.~Cui, ``On-off sketch: A fast and accurate sketch on persistence,'' \emph{Proceedings of the VLDB Endowment}, vol.~14, no.~2, pp. 128--140, 2020.

\bibitem{liu2021xy}
Y.~Liu and X.~Xie, ``{XY-sketch}: On sketching data streams at web scale,'' in \emph{Proceedings of the Web Conference 2021}, 2021, pp. 1169--1180.

\bibitem{cao2023meta}
Y.~Cao, Y.~Feng, and X.~Xie, ``Meta-sketch: A neural data structure for estimating item frequencies of data streams,'' in \emph{AAAI Conference on Artificial Intelligence}, vol.~37, no.~6, 2023, pp. 6916--6924.

\bibitem{liu2024probabilistic}
Y.~Liu and X.~Xie, ``A probabilistic sketch for summarizing cold items of data streams,'' \emph{IEEE/ACM Transactions on Networking}, vol.~32, no.~2, pp. 1287--1302, 2024.

\bibitem{gao2022multi}
R.~Gao, X.~Xie, K.~Zou, and T.~Bach~Pedersen, ``Multi-dimensional probabilistic regression over imprecise data streams,'' in \emph{ACM Web Conference 2022}, 2022, pp. 3317--3326.

\bibitem{xie2016elite}
X.~Xie, B.~Mei, J.~Chen, X.~Du, and C.~S. Jensen, ``Elite: an elastic infrastructure for big spatiotemporal trajectories,'' \emph{The VLDB Journal—The International Journal on Very Large Data Bases}, vol.~25, no.~4, pp. 473--493, 2016.

\bibitem{ashrafi2021gs4}
N.~Ashrafi-Payaman, M.~R. Kangavari, S.~Hosseini, and A.~M. Fander, ``{GS4}: Graph stream summarization based on both the structure and semantics,'' \emph{The Journal of Supercomputing}, vol.~77, pp. 2713--2733, 2021.

\bibitem{zhao2011gsketch}
P.~Zhao, C.~C. Aggarwal, and M.~Wang, ``gsketch: On query estimation in graph streams,'' \emph{arXiv preprint arXiv:1111.7167}, 2011.

\bibitem{tsalouchidou2018scalable}
I.~Tsalouchidou, F.~Bonchi, G.~D.~F. Morales, and R.~Baeza-Yates, ``Scalable dynamic graph summarization,'' \emph{IEEE Transactions on Knowledge and Data Engineering}, vol.~32, no.~2, pp. 360--373, 2018.

\bibitem{hassan2018sbg}
M.~S. Hassan, B.~Ribeiro, and W.~G. Aref, ``{SBG-sketch}: A self-balanced sketch for labeled-graph stream summarization,'' in \emph{International Conference on Scientific and Statistical Database Management}, 2018, pp. 1--12.

\bibitem{feng2023mayfly}
Y.~Feng, Y.~Cao, W.~Hairu, X.~Xie, and S.~K. Zhou, ``Mayfly: a neural data structure for graph stream summarization,'' in \emph{International Conference on Learning Representations}, 2023.

\bibitem{wei2015persistent}
Z.~Wei, G.~Luo, K.~Yi, X.~Du, and J.-R. Wen, ``Persistent data sketching,'' in \emph{ACM SIGMOD {International Conference on Management of Data}}, 2015, pp. 795--810.

\bibitem{peng2018persistent}
Y.~Peng, J.~Guo, F.~Li, W.~Qian, and A.~Zhou, ``Persistent bloom filter: Membership testing for the entire history,'' in \emph{ACM SIGMOD International Conference on Management of Data}, 2018, pp. 1037--1052.

\bibitem{shi2021time}
B.~Shi, Z.~Zhao, Y.~Peng, F.~Li, and J.~M. Phillips, ``At-the-time and back-in-time persistent sketches,'' in \emph{Proceedings of the 2021 International Conference on Management of Data}, 2021, pp. 1623--1636.

\bibitem{fan2023hoppingsketch}
Z.~Fan, Y.~Zhang, S.~Dong, Y.~Zhou, F.~Liu, T.~Yang, S.~Uhlig, and B.~Cui, ``Hoppingsketch: More accurate temporal membership query and frequency query,'' \emph{IEEE Transactions on Knowledge and Data Engineering}, vol.~35, no.~9, pp. 9067--9072, 2023.

\bibitem{l1999tables}
P.~L’ecuyer, ``Tables of linear congruential generators of different sizes and good lattice structure,'' \emph{Mathematics of Computation}, vol.~68, no. 225, pp. 249--260, 1999.

\bibitem{kunegis2013konect}
J.~Kunegis, ``{KONECT:} the {Koblenz} network collection,'' in \emph{International Conference on World Wide Web}, 2013, pp. 1343--1350.

\end{thebibliography}
\end{document}